\definecolor{blu3}{rgb}{.1,.0,.4}
\newtheorem{theorem}{Theorem}
\newtheorem{corollary}[theorem]{Corollary}
\newtheorem{lemma}[theorem]{Lemma}
\newcommand{\RR}{\ensuremath{\mathbb R}}  
\newcommand{\PP}{\ensuremath{\mathcal{P}}}  
\DeclareMathOperator{\loglog}{loglog}
\DeclareMathOperator{\exterior}{ext}
\DeclareMathOperator{\partsum}{partsum}
\DeclareMathOperator{\diam}{diam}
\DeclareMathOperator{\IGL}{IGL}
\DeclareMathOperator{\ISW}{ISW}
\DeclareMathOperator{\cell}{cell}
\newcommand\eps{\varepsilon}
\def\DEF#1{\textbf{\emph{#1}}}
\def\dart#1#2{#1\mathord\shortrightarrow#2}
\begin{document}

\title{Computing the Inverse Geodesic Length\\
		in Planar Graphs and Graphs of Bounded Treewidth}

\author{Sergio Cabello\thanks{Faculty of Mathematics and Physics, University of Ljubljana, Slovenia;
				Institute of Mathematics, Physics and Mechanics, Slovenia.
                Supported by the Slovenian Research Agency (P1-0297, J1-9109, J1-8130, J1-8155, J1-1693, J1-2452).
				Email address: sergio.cabello@fmf.uni-lj.si}}

\maketitle

\begin{abstract}
	The inverse geodesic length of a graph $G$ is the sum of the inverse of the distances
	between all pairs of distinct vertices of $G$. In some domains
	it is known as the Harary index or the global efficiency of the graph.
	We show that, if $G$ is planar and has $n$ vertices, then
	the inverse geodesic length of $G$ can be computed in roughly $O(n^{9/5})$ time.
	We also show that, if $G$ has $n$ vertices and treewidth at most $k$, 
	then the inverse geodesic length of $G$ can be computed in $O(n \log^{O(k)}n)$ time.
	In both cases we use techniques developed for computing the sum of the distances,
	which does not have ``inverse'' component, 
	together with batched evaluations of rational functions.

    \medskip
    \textbf{Keywords:} IGL, distances in graphs, planar graphs, bounded treewidth, algebraic tools
\end{abstract}

\section{Introduction}
Let $G=(V,E)$ be an undirected graph with $n$ vertices and 
(abstract) positive edge-lengths $\lambda\colon E\rightarrow \RR_{>0}$.
The length of a walk in $G$ is the sum of the edge-lengths along
the walk. 
The \DEF{distance} between two vertices $u$ and $v$ of $G$, denoted
by $d_G(u,v)$, is the minimum length over all paths in $G$ from $u$ to $v$. 
A particularly important case is when $\lambda(e)=1$ for all edges $e\in E$.
In this case, the distance between two vertices $u$ and $v$ 
is the minimum number of edges over the $u$-$v$ paths in $G$.
(Usually this is called the \emph{unweighted} case or 
\emph{unit-length} case.)

In this paper we are interested in the \DEF{inverse geodesic length} of $G$, 
defined as
\[
	\IGL(G) ~=~ \sum_{ uv \in \binom{V}{2}} \frac{1}{d_G(u,v)}\, ,
\]
where $\binom{V}{2}$ denotes all the unordered pairs of vertices of $G$.
In this definition we use the convention that $1/d_G(u,v) =0$ when there is no
path from $u$ to $v$.
As a consequence, $\IGL(G)$ is equal to the sum of the inverse geodesic length
over its connected components.

The inverse geodesic length has been considered in different contexts.
In Chemical Graph Theory it is considered for unit-length edges
and called the \DEF{Harary index} of graph $G$,
a common topological index~\cite{TopIndices99,PNTN1993};
sometimes it is defined as twice $\IGL(G)$.
In the context of Network Analysis, the inverse geodesic length 
goes under the name of \DEF{efficiency}, a concept introduced by
Latora and Marchiori~\cite{Latora01,Latora2003} that has been heavily used. 
In this context, $\IGL(G)$ is usually normalized dividing by $\binom n2$, 
that is, the \emph{average} inverse geodesic length is considered.
The author learned the concept with the algorithmic 
work of Gaspers and Lau~\cite{GaspersL19}
and refers to the more careful discussion about related work therein.

The problem is related to but different from the computation of the \DEF{Wiener index} $W(G)$
and the \DEF{diameter} $\diam(G)$, defined as
\[
	W(G)~=~\sum_{ uv \in \binom{V}{2}} d_G(u,v), ~~~~~~~~
	\diam(G)~=~\max\{ d_G(u,v)\mid uv \in \binom{V}{2}\} .
\]

\paragraph{Our contribution.} 
We present algorithms to efficiently
compute $\IGL(G)$ for two standard graph classes. The challenge to obtain
efficient algorithms is to avoid computing the distance between
all pairs of vertices. More precisely, 
we show the following results, where $n$ denotes the number of vertices of the graph:
\begin{itemize}
	\item For each fixed $k\geq 1$, the inverse geodesic length of graphs 
		with treewidth at most $k$ can be computed in 
		$O(n \log^{k+2}n \loglog n)$ time. Thus, the running
		time is near-linear for graphs of bounded treewidth.
	\item When we consider the running time parameterized by $n$ and the treewidth $k$,
		the inverse geodesic length can be computed in $n^{1+\eps} 2^{O_\eps(k)}$ time,
		for any $\eps>0$.\footnote{The 
		constants hidden in the $O_\eps$-notation depend on $\eps$.} 
	\item For planar graphs, the inverse geodesic length can be 
		computed in $\tilde O(n^{9/5})$ time\footnote{Here and in the rest of the paper we
		use the notation $\tilde O$ to hide logarithmic factors.}.
\end{itemize}

To achieve these results we build on the techniques used to compute 
the Wiener index and the diameter in planar graphs~\cite{Cabello2019,GawrychowskiKMS21}, namely the use
of additively-weighted Voronoi diagrams in planar graphs, 
and in graphs with small treewidth~\cite{AbboudWW16,BringmannHM20,CabelloK09},
namely the use of orthogonal range searching.

Previous papers computing the Wiener index or the diameter rely on the fact that
the length of a shortest path is the sum of the lengths of subpaths that compose it. 
This is not true for the inverse of distances.
More precisely, those works decompose the problem into subproblems
described by a triple $(a,s,U)$,
where $a$ and $s$ are vertices, $U$ is a subset of vertices, and the shortest
path from $a$ to each vertex of $U$ goes through vertex $s$. 
Moreover, a preprocessing step is used to construct a data structure
so that $|U|$, $\sum_{u\in U} d_G(s,u)$ and $\max_{u\in U} d_G(s,u)$ 
are obtained in sublinear time (in $|U|$).
This preprocessing pays off because there are several subproblems
with triples of the form $(\cdot,s,U)$ for the same $s$ and $U$.
Then we use that
\begin{align*}
	\sum_{u\in U} d_G(a,u) ~&=~ \sum_{u\in U} \Bigl( d_G(a,s)+d_G(s,u)\Bigr) ~=~
	|U|\cdot d_G(a,s)+\sum_{u\in U} d_G(s,u) \, ~~\text{ and}\\
	\max_{u\in U} d_G(a,u) ~&=~ d_G(a,s)+\max_{u\in U} d_G(s,u)\,.
\end{align*}
There is no such simple decomposition of 
\[
	\sum_{u\in U} \frac{1}{d_G(a,u)} ~=~ \sum_{u\in U} \frac{1}{d_G(a,s)+d_G(s,u)}
\]
that would be useful for $\IGL(G)$. That is the main obstacle to 
apply the techniques.

Our main new technique is to associate to $(\cdot,s,U)$ a rational function
\[
	\rho_{s,U}(x) ~=~ \sum_{u\in U} \frac{1}{x+d_G(s,u)}.
\]
Then, for the triple $(a,s,U)$, we are interested in evaluating $\rho_{s,U}(d_G(a,s))$.
We use computer algebra to make such evaluations at several values in near-liner time, 
as done by Aronov, Katz and Moroz~\cite{AronovK18,MorozA16} in completely different settings.
The key insight is that $\rho_{s,U}(x)$ can be computed efficiently
because, after expansion, it is a rational function
whose numerator and denominator have degree $|U|$.
Note that all the evaluations of $\rho_{s,U}(x)$ are computed simultaneously, 
which is also a change in the approach compared to previous works for the Wiener index or the diameter. 
Thus, we do not use a data structure to handle each triple $(a,s,U)$, but
we treat all the triples of the form $(\cdot,s,U)$ batched.

We believe that the use of tools from Computational Geometry 
and Computer Algebra
for solving graph-distance problems, such as computing the $\IGL$,
shows an appealing interaction between different areas.

\paragraph{Comparison to related work.}
The efficient computation of $\IGL$ for graphs with small treewidth was considered by 
Gaspers and Lau~\cite{GaspersL19}. 
They show that $\IGL(G)$ for graphs with unit-length edges and treewidth at most $k$ 
can be computed in $2^{O(k)} n^{3/2+\varepsilon}$.
In the case of unit-length trees the running time is reduced to $O(n\log^2 n)$.
Their approach is based on computing the distance distribution:
for any positive integer $P$,
one can compute the number of pairs of vertices at distance $i$ for 
$i=1,\dots,P$ in $2^{O(k)} n^{1+\varepsilon} \sqrt{P}$ time.
In the case of trees the distance distribution can be computed 
$O(n \log n + P \log^2 n)$ time.
From the distance distribution it is easy to compute the $\IGL(G)$ in linear time.

We improve the result of Gaspers and Lau in two ways. First, we reduce the degree
of the main polynomial term of the running time from roughly $3/2$ to $1$ for graphs of small
treewidth. Second, our results
work for arbitrary (positive) weighted graphs, not only unit length (or small integers).
For the case of unit-length trees, their result is stronger than ours.
Our algorithm does not compute the distance distribution. 
In fact, for arbitrary weights it would be too large. 
Their algorithm for computing the distance distribution relies on the fast product of polynomials.
They do not use rational functions, which is a key insight in our approach.
They use the degree to encode distances, so the edge lengths must be small positive integers,
while in our approach the degree of the rational functions depends (implicitly) 
on the number of vertices.

For graphs with small treewidth, our algorithm to compute the $\IGL$ 
has a few more logarithmic factors than the algorithms 
to compute the Wiener index or the diameter~\cite{BringmannHM20,CabelloK09}.
This slight increase in the running time is due to the 
manipulation and evaluation of rational functions.

\medskip

We are not aware of any previous algorithmic result for computing the $\IGL$
of planar graphs. Since the distance between all pairs of vertices in a 
planar graph can be computed in $O(n^2)$ time~\cite{f-pgdap-91,hkrs-fspap-97},
the $\IGL$ of planar graphs can be computed in $O(n^2)$ time.
We provide the first algorithm computing the $\IGL$ of planar graphs in
subquadratic time, even in the special case of unit-length edges.

For planar graphs, the running time of computing the $\IGL$ is 
$\tilde O(n^{9/5})$, while the Wiener index and the diameter
can be computed in $\tilde O(n^{5/3})$ time
with the algorithm of Gawrychowski et al.~\cite{GawrychowskiKMS21}.
The difference is quite technical and difficult to explain at
this level of detail. For readers familiar with that work,
we can point out that, to apply our new approach,
the dynamic tree used in~\cite{GawrychowskiKMS21} 
to encode the bisectors should be combined with the rational functions. 
The natural way to do this would be to associate a rational function to each node 
of the dynamic tree,
and that leads to an overhead of $O(r)$ when working with a piece
of an $r$-division. A more careful treatment of the 
dynamic tree may lead to an improvement, but we do not see how. 
The approach we use to manipulate (the inverse of) the distances 
for each piece is similar to the one used in Cabello~\cite{Cabello2019},
where the bisectors are computed and manipulated explicitly.
It should be noted that, at this point, the bottleneck in our algorithm
is not the computation of the Voronoi diagram,
but the time to manipulate the outcome of the Voronoi diagram.

\medskip

For arbitrary graphs, there is no constant $\delta_0>0$ such
that the $\IGL$ can be computed in $O(n^{2-\delta_0})$ time, unless
the strong exponential time hypothesis (SETH) fails. This holds
also for sparse graphs.
Indeed, Roditty and Vassilevska Williams~\cite{RodittyW13}
show that, for arbitrary graphs with $n$ vertices and $O(n)$ edges, 
one cannot compute the diameter in $O(n^{2-\delta_0})$ time, 
for some constant $\delta_0>0$, unless the SETH fails.
In fact, their proof shows that for undirected, unweighted graphs
we cannot distinguish in $O(n^{2-\delta_0})$ time
between sparse graphs that have diameter $2$ or larger.
Since a unit-length graph $G=(V,E)$ has diameter $2$ if and only if  
\[
	\IGL(G) ~=~ \frac 12 \, \sum_{u\in V} \Bigl( \deg_G(u)+(n-1-\deg_G(u))\tfrac{1}{2} \Bigr)
	~=~ 
	\frac{n(n-1)}{2}-\frac{3|E(G)|}{4},
\]	
we cannot compute the $\IGL$ in $O(n^{2-\delta_0})$ time, unless SETH fails.

\paragraph{Computation model.} 
We assume a model of computation where each arithmetic operation 
in the ring generated by the edge-lengths takes constant time 
and can be carried out exactly. 
Since we are using the Fast Fourier Transform (FFT), the actual running time depends
on whether we assume that the ring has primitive roots of the unit available. 
The assumptions are explained in more detail in Section~\ref{sec:algebra}.

The output and several intermediate computations are represented as fractions, 
that is, as pairs $(a,b)$ of numbers representing the fraction $a/b$. 
The numerators and the denominators of these fractions are in the ring generated 
by the edge-lengths. Thus, if we assume that the input edge-lengths are integers, 
the fractions will have integer numerators and denominators, 
if the input edge-lengths are rationals, 
the fractions will have rational numerators and denominators,
and if the edge-lengths are arbitrary real numbers, then the fractions
will have numerators and denominators with real numbers.

The model of computation used by Gaspers and Lau~\cite{GaspersL19} is Word RAM,
which is a weaker model and thus a stronger result. 
They can use this model because they assume graphs with small integer
edge-weights and they compute, for each possible distance, 
the number of pairs of nodes at such distance.
After this, one still has to compute $\IGL(G)$ explicitly
and it is not obvious how to bound the bit-length of the computation.
In our approach, we should also bound the bit-length of
intermediate integers appearing through the computation. 
This is not straightforward. To get a feeling of this endeavor, 
note for example that we should understand the number of bits needed 
to write expressions like the Harmonic numbers 
$H_n=\sum_{i=1}^n \tfrac 1i$ exactly as a single fraction $a_n/b_n$.

\paragraph{Roadmap.}
We provide some common preliminaries in the next section.
In Section~\ref{sec:treewidth} we provide the algorithm for graphs with small or bounded
treewidth, while in Section~\ref{sec:planar} we provide the algorithm for planar graphs.
Sections~\ref{sec:treewidth} and~\ref{sec:planar} are independent and can be 
read in any order. The structure of the exposition in both cases is parallel.
We suggest to the readers to start with the section dedicated to graphs that
are closer to their expertise.

\section{Preliminaries}
\label{sec:preliminaries}

For each positive integer $k$ we use the notation $[k]= \{ 1,\dots,k\}$.

Let $G$ be a graph with positive edge lengths. Since the graph $G$ under
consideration will always be clear from the context, 
we will often drop the dependency on $G$ from the notation.
(We do keep using $d_G(\cdot,\cdot)$ for the distance.)

Let $A$ and $B$ be disjoint subsets of vertices in $G$. We define
\[
	\IGL(A,B) ~=~ \sum_{a\in A, b\in B} \frac{1}{d_G(a,b)}.
\]
For the special case when one of the sets has a single element, for example $B=\{b\}$,
we write
\[
	\IGL(A,b) ~=~ \IGL (A,\{ b \}) ~=~ \sum_{a\in A}\frac{1}{d_G(a,b)}.
\]

A \DEF{separation} in $G$ is a partition of $V(G)$ into pairwise disjoint sets of vertices
$A,B,S$ such that there is no edge of $G$ with one vertex in $A$ and the other vertex in $B$.
The set $S$ is the \DEF{separator}. We will systematically use $a$ for vertices in $A$,
$b$ for vertices in $B$, and $s$ for vertices in $S$. Note that any path from $a\in A$ to $b\in B$
must necessarily pass through a vertex of $S$. In particular,
\[
	\forall a\in A,\, b\in B:~~~ d(a,b)=\min \{ d_G(a,s)+d_G(s,b) \mid s\in S\}.
\]

We use $\sqcup$ to denote the \DEF{disjoint union}. Thus, if we write
$A=\sqcup_{i\in I} A_i$, this means that the sets in $\{A_i \mid i\in I\}$ 
are pairwise disjoint and $A=\cup_{i\in I} A_i$.

\subsection{Algebraic computations.}
\label{sec:algebra}
A polynomial $P(x)=\sum_{i=0}^d a_i x^i$ given in \DEF{coefficient representation}
is the list of coefficients $(a_0,\dots, a_d)$ of length $1+\deg(P)$.
A rational function $R(x)$ is a function $R(x)=\frac{P(x)}{Q(x)}$ where $P(x)$ and $Q(x)$
are polynomials. The coefficient representation of such rational function
is just the coefficient representation of $P(x)$ and $R(x)$. 
The following result is obtained by
using fast multiplication of polynomials followed by fast 
multipoint evaluation of polynomials. We refer to the book by 
von zur Gathen and Gerhard~\cite[Chapters~8 and~10]{book_0031325} 
for a comprehensive treatment of the tools of Computer Algebra that we will use.

\begin{lemma}[Corollary A.4 in Aronov and Katz~\cite{AronovK18}; 
	see also Lemmas 6 and 7 in Moroz and Aronov~\cite{MorozA16}]
\label{lem:multipoint_evaluation}
	Given a set of $n$ rational functions $R_i(x)=P_i(x)/Q_i(x)$ of constant degree each,
	in coefficient representation, and a set of $m$ values $x_j$, 
	one can compute the $m$ values $\sum_i R_i(x_j)$ in time $O((n + m) \log^2 n \loglog n)$.
\end{lemma}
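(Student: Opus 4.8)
The plan is to reduce the problem to two classical algebraic primitives: fast multiplication of polynomials (via FFT over a suitable ring or via Schönhage–Strassen-type integer multiplication) and fast multipoint evaluation of a single polynomial. First I would combine the $n$ rational functions into a single rational function. Since each $R_i = P_i/Q_i$ has constant degree, write $Q(x) = \prod_{i} Q_i(x)$ and observe that
\[
	\sum_{i} R_i(x) ~=~ \frac{\sum_{i} P_i(x)\prod_{j\neq i} Q_j(x)}{\prod_i Q_i(x)} ~=~ \frac{\hat P(x)}{Q(x)},
\]
where both $\hat P$ and $Q$ have degree $O(n)$. To build $Q$ and $\hat P$ in coefficient representation efficiently, I would use the standard product tree: place the $Q_i$ at the leaves of a balanced binary tree and multiply pairwise up the tree, so that each of the $O(\log n)$ levels costs $O(M(n))$ where $M(n)$ is the cost of multiplying two degree-$n$ polynomials. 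For the numerator $\hat P$, the cleanest route is a divide-and-conquer that, for a subtree covering an index set $J$, returns the pair $\bigl(\sum_{i\in J} P_i \prod_{j\in J,\, j\neq i} Q_j,\ \prod_{i\in J} Q_i\bigr)$; merging two children $(\hat P_L, Q_L)$ and $(\hat P_R, Q_R)$ is just $(\hat P_L Q_R + \hat P_R Q_L,\ Q_L Q_R)$, again $O(M(n))$ per level, for $O(M(n)\log n)$ total.

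Next I would evaluate. Having $\hat P$ and $Q$ in coefficient representation with degree $O(n)$, apply fast multipoint evaluation to compute $\hat P(x_j)$ and $Q(x_j)$ for all $m$ query points simultaneously; the classical algorithm (build the subproduct tree of the $(x - x_j)$, then recursively take remainders) runs in $O\bigl(M(N)\log N\bigr)$ for $N = \max\{n,m\}$ points and a polynomial of degree $O(N)$, after padding with dummy points or truncating as needed. Then $\sum_i R_i(x_j) = \hat P(x_j)/Q(x_j)$ is recovered with one division per point, an extra $O(m)$. Collecting the costs and using $M(N) = O(N\log N\loglog N)$ (from fast integer/polynomial multiplication in the Word RAM model, where the numbers involved have bit-length bounded polynomially and hence occupy $O(1)$ machine words up to the stated log factors), the total is $O\bigl((n+m)\log^2 N\loglog N\bigr)$, which matches the claim. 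One should also check the edge case $Q(x_j)=0$: since the $Q_i$ have positive roots (they are of the form $x + d_G(s,u)$ in the application) while the evaluation points are nonnegative distances, in practice $Q(x_j)\neq 0$; in general position one can perturb or note the $x_j$ avoid the $O(n)$ roots.

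The main obstacle I anticipate is purely bookkeeping rather than conceptual: being careful that "constant degree each" really does make the intermediate objects have degree exactly $O(n)$ (so the product tree has the stated cost) and that the numbers produced — the coefficients of $\hat P$ and $Q$, which are sums of products of $O(n)$ input values — still fit the Word RAM assumption that arithmetic on values derived from the input takes constant time, so that $M(N) = O(N\log N\loglog N)$ is legitimate here. Since the lemma is quoted verbatim from Aronov and Katz~\cite{AronovK18} and Moroz and Aronov~\cite{MorozA16}, the honest write-up is to cite those sources for the two primitives and spell out only the reduction above; I would not reprove fast multipoint evaluation from scratch.
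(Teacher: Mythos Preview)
Your proposal is correct and is precisely the standard argument behind the cited results: build a product tree over the $Q_i$, combine numerators via the recurrence $(\hat P_L Q_R + \hat P_R Q_L,\, Q_L Q_R)$, and then apply fast multipoint evaluation to the resulting $\hat P$ and $Q$ of degree $O(n)$. Note, however, that the paper does not give its own proof of this lemma at all --- it simply imports the statement from Aronov and Katz~\cite{AronovK18} and Moroz and Aronov~\cite{MorozA16} as a black box --- so there is nothing to compare against beyond observing that your sketch matches what those sources do.
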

We provide the proof to be able to discuss the model of computation in more detail.
\begin{proof}
	As a first step, we compute polynomials $P(x)$ and $Q(x)$ such that 
	\[
		\frac{P(x)}{Q(x)} ~=~ \sum_{i=1}^n R_i(x) ~=~ \sum_{i=1}^n \frac{P_i(x)}{Q_i(x)}.
	\]
	More precisely, we will compute 
	\[
		P(x)~=~ \sum_{i=1}^n \left( P_i(x) \cdot \prod_{j\neq i} Q_j(x)\right) ~~~~\text{and}~~~~
		Q(x)~=~ \prod_{i=1}^n Q_i(x).
	\]
	We do this using a recursive algorithm.
	If $n=1$, we then have $P(x)=P_1(x)$ and $Q(x)=Q_1(x)$.
	If $n\ge 2$, we compute, for each $j=1,\dots, \lfloor n/2\rfloor$,
	the rational function 
	\[
		\tilde R_j(x) ~:=~ R_{2j-1}(x)+ R_{2j}(x) ~=~ 
		\frac{P_{2j-1}(x)Q_{2j}(x)+P_{2j}(x)Q_{2j-1}(x)}{Q_{2j-1}(x) Q_{2j}(x)}
	\]
	in coefficient representation.
	If $n$ is odd we set $\tilde R_{\lceil n/2\rceil}(x)=R_n(x)$.
	We then compute recursively the rational function $\sum_{j=1}^{\lceil n/2\rceil} \tilde R_j(x)$
	and return it.
	
	To analyze the running time note that, after $k$ levels of the recursion, 
	we have $O(n/2^k)$ rational functions, each
	with a denominator and a numerator of degree $O(2^k)$ and coefficients in the ring
	generated by the coefficients of $\cup_i \{P_i(x), Q_i(x) \}$.
	Let $T_{\rm prod}(d)$ be the time needed to multiply two polynomials of degree $d$ in coefficient representation,
	assuming that each arithmetic operation in a ring that contains the coefficients 
	of the polynomials takes constant time.
	Using the Fast Fourier Transform we have $T_{\rm prod}(d)=O(d \log d \loglog d)$; 
	see~\cite[Theorem~8.23]{book_0031325}.
	The time needed to compute $\frac{P(x)}{Q(x)} ~=~ \sum_i R_i(x)$ is then
	\begin{align*}
		O\left( \sum_{k=1}^{\lceil\log_2 n \rceil} \left( \frac{n}{2^k}\cdot \bigl( 2^k + T_{\rm prod}(2^k)\bigr) \right) \right) ~&=~
		O(n\log n) +  O(n)\cdot \sum_{k=1}^{\lceil\log_2 n \rceil} \left( (2^k \log 2^k \loglog 2^k) /2^k \right) \\ 
&=~
		O(n)\cdot \sum_{k=1}^{\lceil\log_2 n \rceil} (k \log k) \\ 
&=~ 
		O(n \log^2 n \loglog n).	
	\end{align*}
	
	Once we have the polynomials $P(x)$ and $Q(x)$ such that 
	$\frac{P(x)}{Q(x)} ~=~ \sum_i R_i(x)$ in coefficient
	representation, we evaluate $P(x)$ and $Q(x)$ at $x=x_j$ for all $j$.
	Note that $P(x)$ and $Q(x)$ have degree $O(n)$.
	Evaluating a polynomial of degree $d$ at $d$ points takes $O\left(T_{\rm prod}(d) \cdot \log d\right)$ time;
	see~\cite[Corollary~10.8]{book_0031325}.
	To evaluate $P(x)$ and $Q(x)$ at $x=x_j$ for all $j$, we make $\lceil m/n\rceil$ groups of $n$
	values each, and evaluate at each group. The running time for this is
	\begin{align*}
		O(1+m/n)O\left(T_{\rm prod}(d) \cdot \log d\right) ~&=~ O(1+m/n)\cdot O(n \log^2 n \loglog n) 
		\\ &=~ O((n+m) \log^2 n \loglog n).
	\end{align*}
	
	\emph{Note}: the $\loglog n$ term may seem superfluous, but this depends on the assumptions.
	If we assume that the ring supports the FFT, that is, we can manipulate primitive 
	roots of the unity in constant time, and divisions can be carried out exactly,
	then $T_{\rm prod}(d)=O(d \log d)$; see~\cite[Theorem~8.18]{book_0031325}.
	In this case the $\loglog n$ factor in the final running time disappears.
	If we are assuming that the coefficients and the values $x_j$ are arbitrary real numbers, 
	this assumption would mean that arithmetic operations involving the complex
	roots of the unit $\cos(\pi/2^k)+ i \sin(\pi/2^k)$ can be manipulated exactly in constant time.
	In our stated running times we are making the more conservative assumption that only operations
	in the ring generated by $\{ d_G(u,v)\mid u,v\in V(G) \}$ are available, and divisions
	are not performed.
\end{proof}

All our use of computer algebra is encoded in Lemma~\ref{lem:multipoint_evaluation}. 
More precisely, we will be using Lemma~\ref{lem:multipoint_evaluation}
for rational functions of the form
\[
	R_i(x)~=~ \frac{1}{d_G(u_i,v_i)+x} ~~~~\text{ for some $u_i,v_i\in V(G)$}
\]
and we will be evaluating $\sum_i R_i(x)$ at values of the form $x=d_G(u_j,v_j)$
for some $u_j,v_j\in V(G)$.
Thus, looking into the proof of Lemma~\ref{lem:multipoint_evaluation} and the 
proofs in~\cite[Chapters~8 and~10]{book_0031325}, we see that we are making
arithmetic operations in the ring generated by $\{ d_G(u,v)\mid u,v\in V(G) \}$,
without computing inverses and without ever simplifying fractions. 
Thus, we carry exact results, assuming that arithmetic operations are performed exactly.
The running times are expressed assuming that each such arithmetic operation in the ring
takes constant time.

\section{Bounded treewidth}
\label{sec:treewidth}

In this section we provide the near-linear time algorithm for graphs with
bounded treewidth.
First we review what we need for the usual orthogonal range searching queries, and
then we extend this to provide our new, key data structure for so-called inverse shifted queries.
We describe how to compute the interaction across a separator,
and finally apply the divide-and-conquer approach of Cabello and Knauer~\cite{CabelloK09}
to derive the final result. 

\subsection{Preliminaries on orthogonal range searching}
In our algorithm for small treewidth we will use the range trees used for
orthogonal range searching. They are explained in the textbook~\cite[Chapter 5]{bkos-08} 
or the survey~\cite{ae-survey}, but the usual analysis in Computational Geometry
assumes constant dimension. We use the analysis by
Bringmann et al.~\cite{BringmannHM20}, which uses the dimension $d$ as a parameter
and is based on the results of Monier~\cite{Monier80}.

A \DEF{box} in $\RR^d$ is the product of intervals,
some of them possibly infinite and some of them possibly open or closed.
(We only consider such axis-parallel boxes in this paper.)
We will use $B(n,d)=\binom{d+\lceil\log n\rceil}{d}$ to bound the time and size 
of the range trees. First we note a bound on $B(n,d)$, and then provide a multivariate bound
on range trees. Note that the stated bounds do \emph{not} assume that $d$ is constant.

\begin{lemma}[Lemma 5 in Bringmann et al.~\cite{BringmannHM20}]
\label{lem:B(n,d)}
$B(n,d)=O(\log^d n)$ and $B(n,d)=n^{\eps}2^{O_\eps(d)}$ for each $\eps>0$.
\end{lemma}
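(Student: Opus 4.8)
The plan is to estimate the binomial coefficient directly. Write $\ell=\lceil\log n\rceil$, so that $B(n,d)=\binom{d+\ell}{d}=\binom{d+\ell}{\ell}$, and treat the two asserted bounds separately.

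For the bound $B(n,d)=O(\log^d n)$, where $d$ is regarded as fixed, I would expand
\[
	\binom{d+\ell}{d} ~=~ \frac{(\ell+1)(\ell+2)\cdots(\ell+d)}{d!} ~\le~ \frac{(\ell+d)^d}{d!},
\]
and then use $\ell\le\log n+1\le 2\log n$ for $n\ge 2$ to get $\binom{d+\ell}{d}\le\frac{(d+2)^d}{d!}\,\log^d n=O(\log^d n)$, with the constant depending only on $d$.

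For the bound $B(n,d)=n^{\eps}2^{O(d)}$, the point is that one needs a uniform estimate in $d$ together with an \emph{arbitrarily} small polynomial factor in $n$. I would introduce a free parameter $t>0$ and use that the $i=d$ term is one of the nonnegative terms in the binomial expansion of $(1+t)^{d+\ell}$:
\[
	t^d\binom{d+\ell}{d} ~\le~ \sum_{i=0}^{d+\ell}\binom{d+\ell}{i}t^i ~=~ (1+t)^{d+\ell},
	\qquad\text{hence}\qquad B(n,d)~\le~\Bigl(1+\tfrac1t\Bigr)^{d}(1+t)^{\ell}.
\]
Given $\eps>0$, choosing $t=2^{\eps}-1>0$ makes $(1+t)^{\ell}=2^{\eps\ell}\le 2^{\eps(\log n+1)}=O(n^{\eps})$ and $\bigl(1+\tfrac1t\bigr)^{d}=\bigl(\tfrac{2^{\eps}}{2^{\eps}-1}\bigr)^{d}=2^{O(d)}$, where the constant in the exponent depends on $\eps$; multiplying the two estimates yields $B(n,d)=n^{\eps}2^{O(d)}$.

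The only mild obstacle is the second bound: the naive estimate $\binom{d+\ell}{d}\le 2^{d+\ell}$ gives only $n\cdot 2^{O(d)}$, i.e.\ the exponent $\eps=1$, so to obtain an arbitrarily small polynomial factor one must introduce the tuning parameter $t$ (equivalently, optimize an entropy-type bound) and trade its size against the $2^{O(d)}$ term. Everything else is routine manipulation of binomial coefficients.
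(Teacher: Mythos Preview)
Your proof is correct. Note, however, that the paper does not actually prove this lemma: it is quoted verbatim as Lemma~5 of Bringmann et al.~\cite{BringmannHM18} and used as a black box, so there is no ``paper's own proof'' to compare against here. Your argument is a clean, self-contained way to establish both bounds; the generating-function trick $t^d\binom{d+\ell}{d}\le(1+t)^{d+\ell}$ with a tunable $t$ is exactly the right idea for trading the polynomial exponent against the exponential-in-$d$ constant, and your remark that the naive $2^{d+\ell}$ bound only gives $\eps=1$ correctly identifies why this extra flexibility is needed.
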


\begin{theorem}
\label{thm:rangesearching}
	Given a set $P$ of $n$ points in $\RR^d$, there is a family of 
	sets $\mathcal{P}=\{ P_i \mid i\in I\}$ and a data structure with
	the following properties: 
	\begin{itemize}
		\item $P_i\subseteq P$ for each $P_i\in \mathcal{P}$;
		\item all the sets of $\mathcal {P}$ together have $O(nd\cdot B(n,d))$ points, counting with multiplicity;
			that is, $\sum_{P_i\in \mathcal{P}} |P_i|=O(nd\cdot B(n,d))$;
		\item for each box $R\subset \RR^d$,
			the data structure finds in $O(2^d B(n,d))$ time indices $I_R\subset I$ such that 
			$|I_R|= O(2^d B(n,d))$
			and $P\cap R = \bigsqcup_{i\in I_R} P_i$;
		\item the family $\mathcal{P}$ and the data structure 
			can be computed in $O(nd\cdot B(n,d))$ time.
	\end{itemize}
\end{theorem}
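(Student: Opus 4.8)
The statement is essentially a restatement, as a canonical-subset decomposition, of the multivariate range tree of Bringmann et al.~\cite{BringmannHM18} (itself based on Monier~\cite{Monier80}), so the plan is to build that tree and read off the four claimed properties, leaning on their dimension-sensitive analysis for the quantitative parts.

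I would construct the structure by recursion on the coordinates. For $d=0$ the data structure is a single bucket holding the at most one point of $P$, and $\mathcal{P}$ is the corresponding singleton family. For $d\ge 1$, sort the points of $P$ by their first coordinate (breaking ties by a fixed total order on $P$) and build a balanced binary search tree $T_1$ on these values. To each node $v$ of $T_1$ associate its \emph{canonical subset} $P(v)$, the set of points stored below $v$; discard the first coordinate of the points of $P(v)$ and recursively build a $(d-1)$-dimensional structure on $P(v)$. Finally, let $\mathcal{P}$ be the union of the families returned by all these recursive calls: a set $P_i\in\mathcal{P}$ is obtained by choosing a node at each of the $d$ levels and intersecting the chosen canonical subsets, so $P_i\subseteq P$, which gives the first property.

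For the size, fix $p\in P$. It lies in $P(v)$ precisely for the $v$ that are ancestors of the leaf of $T_1$ holding $p$; there are at most $h+1$ of these, with $h=\lceil\log n\rceil$ the height of $T_1$, and the ancestor at depth $i$ roots a subtree of height at most $h-i$. Hence, if $f(h,d)$ is the maximum number of sets of $\mathcal{P}$ containing a fixed point, then $f(h,d)\le\sum_{j=0}^{h}f(j,d-1)$ and $f(h,0)=1$, which solves to $f(h,d)\le\binom{h+d}{d}=B(n,d)$. Summing over the $n$ points, and over the $d$ levels of substructures whose intermediate canonical subsets and coordinate lists must also be stored, gives $\sum_{P_i\in\mathcal{P}}|P_i|=O(nd\cdot B(n,d))$; building the structure bottom-up, merging sorted coordinate lists once per level and invoking the recursion, takes $O(nd\cdot B(n,d))$ time as well. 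I would simply cite~\cite{BringmannHM18} for these last two estimates in this exact form, since the point of their analysis is precisely to carry out the accounting without hiding the dependence on $d$.

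For a box query $R=I_1\times\dots\times I_d$, search $T_1$ for the two endpoints of $I_1$; the maximal subtrees that hang off the two search paths on the side of $I_1$ are at most $2\lceil\log n\rceil$ nodes whose canonical subsets partition $\{p\in P:\ p_1\in I_1\}$. Recursing into the $(d-1)$-dimensional structure attached to each such node with the box $I_2\times\dots\times I_d$ and concatenating the returned index lists produces $I_R$; that $\{P_i:\ i\in I_R\}$ is a partition of $P\cap R$ follows by induction on $d$. The same two-paths argument yields $g(h,d)\le 2\sum_{j=0}^{h}g(j,d-1)$ with $g(h,0)=1$, hence $|I_R|\le 2^{d}B(n,d)$; I would again defer to the analysis of Bringmann et al. for the matching $O(2^d B(n,d))$ bound on the query time, which needs the per-level navigation to be amortized rather than paid as a separate $\log n$ factor at every canonical node. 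The main obstacle throughout is not the construction, which is classical, but keeping every bound in the closed form $B(n,d)=\binom{d+\lceil\log n\rceil}{d}$ with only a $2^{O(d)}$ overhead — avoiding a stray $d!$ or $(\log n)^{d}$ blow-up — and making sure that ties among coordinate values do not break the partition property; Lemma~\ref{lem:B(n,d)} then converts these into the two usable forms $O(\log^d n)$ and $n^{\eps}2^{O(d)}$.
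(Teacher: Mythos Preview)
Your proposal is correct and follows essentially the same approach as the paper: build a $d$-level range tree, take the canonical subsets as the family $\mathcal{P}$, and invoke the dimension-sensitive analysis of Bringmann et~al.~\cite{BringmannHM18} for the size and query bounds. Your explicit recurrences $f(h,d)\le\sum_{j\le h}f(j,d-1)$ and $g(h,d)\le 2\sum_{j\le h}g(j,d-1)$ are a welcome addition of detail over the paper's proof, which simply cites \cite[Lemmas~6 and~8]{BringmannHM18} for these steps; the only cosmetic differences are that the paper recurses on the last coordinate with base case $d=1$, and includes all intermediate canonical subsets in $\mathcal{P}$ (noting parenthetically that the bottom-level ones suffice, which is what you do).
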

\begin{proof}
	We consider the following variant of range trees.
	We assume that the reader is familiar with range trees and provide only
	a sketchy description. 
	We describe a tree $T(d,Q)$ that is defined recursively on the dimension $d$ and 
	size of the point set $Q\subset \RR^d$ under consideration.
	When the dimension $d$ is $1$, we make a balanced binary
	search tree $T(1,Q)$ that stores the elements of $Q$ at the leaves.
	
	When $d>1$ and $|Q|>1$, we split $Q$ into two sets $Q_\ell$ and $Q_r$ of roughly the same size,
	such that each element of $Q_\ell$ has smaller $d$-coordinate than each element
	of $Q_r$. We recursively construct $T(d,Q_\ell)$, $T(d,Q_r)$ and $T(d-1,Q)$.
	Finally, we make a node, the root of $T(d,Q)$,
	whose left child is the root $T(d,Q_\ell)$,
	its right child is the root of $T(d,Q_r)$, and its associated data structure
	is $T(d-1,Q)$.
	When $d>1$ and $|Q|=1$, we make a node without children and with a pointer to the associated
	data structure $T(d-1,Q)$, which is built recursively.
	
	The data structure is $T(d,P)$.
	For each node $v$ of $T(d,P)$, let $P(v)$ denote the set
	of points stored in $T(d,P)$ under $v$. This is a so-called \emph{canonical subset}.
	(In fact it suffices to consider nodes $v$ in trees $T(1,Q)$, but that is not
	really relevant in our discussion.)
	For each node $v$ of $T(d,P)$, we add the canonical subset $P(v)$
	to $\mathcal{P}$.
	
	The analysis by Bringmann et al.~\cite[Lemma 6]{BringmannHM20} shows
	that 
	\[
		\sum_{v} |P(v)| ~=~ O(nd\cdot B(n,d)),
	\]
	where the sum is over all the nodes $v$ of $T(d,P)$.
	This gives an upper bound on the size of the data structure and
	the construction time.
	(For the construction, instead of computing medians, as Bringmann et al.~suggest,
	it is better to
	sort the points once in each dimension and pass to the recursive calls
	the points sorted in $d$ lists, one for each dimension.)

	The analysis of Bringmann et al.~for queries~\cite[Lemma 8]{BringmannHM20}
	identifies for each box $R$
	the nodes of the tree such that $P\cap R$ is the disjoint union of
	$O(2^d B(n,d))$ canonical subsets from $\mathcal{P}$ and they can
	be found in $O(2^d B(n,d))$ time. 
\end{proof}

\subsection{Inverse shifted queries}

Let $P$ be a set of $n$ points in $\RR^d$ and 
assume that each point $p$ of $P$ has a positive weight $\omega(p)$.
An \DEF{inverse shifted-weight query} (ISW query) is specified
by a pair $(R,\delta)$, where $R$ is an (axis-parallel) box and $\delta$ is
a non-negative value, the \emph{shift}.
We want to compute 
\[
	\ISW(R,\delta) ~=~ \sum_{p\in P\cap R} \frac{1}{\delta+\omega(p)} \, .
\]
Since $\omega(p)$ is positive and $\delta\geq 0$, no fraction has $0$ in the denominator
and $\ISW(R,\delta)$ is well defined.

We do not know how to answer ISW queries online, but we can solve the batched
version.

\begin{theorem}
\label{thm:ISW}
	Let $P$ be a set of $n$ points in $\RR^d$ and
	assume that each point $p\in P$ has a positive weight $\omega(p)$.
	Consider $\ISW$ queries specified by $(R_1,\delta_1),\dots, (R_m,\delta_m)$,
	where each $R_j$ is a box and $\delta_j\ge 0$. 
	The values $\ISW(R_1,\delta_1),\dots, \ISW(R_m,\delta_m)$ 
	can be computed in $O((n+m)2^d \cdot B(n,d)\cdot \log^2 n \loglog n )$ time.
\end{theorem}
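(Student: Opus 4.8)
The plan is to combine the range-tree decomposition of Theorem~\ref{thm:rangesearching} with the batched rational-function evaluation of Lemma~\ref{lem:multipoint_evaluation}. First I would build the data structure and the family $\mathcal{P}=\{P_i\mid i\in I\}$ of Theorem~\ref{thm:rangesearching} for the point set $P$, and for each query box $R_j$ compute the index set $I_{R_j}$ with $P\cap R_j=\bigsqcup_{i\in I_{R_j}}P_i$ and $|I_{R_j}|=O(2^dB(n,d))$. This costs $O(ndB(n,d))$ for the construction and $O(2^dB(n,d))$ per query, and it gives the decomposition
\[
	\ISW(R_j,\delta_j)~=~\sum_{i\in I_{R_j}}\;\sum_{p\in P_i}\frac{1}{\delta_j+\omega(p)}
	~=~\sum_{i\in I_{R_j}}\rho_i(\delta_j),
	\qquad\text{where}\quad \rho_i(x):=\sum_{p\in P_i}\frac1{x+\omega(p)}\,.
\]
Hence it suffices to evaluate, for every $i\in I$, the function $\rho_i$ at the shifts $\delta_j$ with $i\in I_{R_j}$, and then to add up, for each query $j$, the $|I_{R_j}|$ relevant values.

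The key point is that $\rho_i$ is already a sum of $|P_i|$ rational functions $x\mapsto 1/(x+\omega(p))$, each of constant degree, so Lemma~\ref{lem:multipoint_evaluation} applies without modification: invoking it with these $|P_i|$ functions and the (at most $m_i:=|\{j\mid i\in I_{R_j}\}|$ distinct) values $\delta_j$ with $i\in I_{R_j}$ produces all the values $\rho_i(\delta_j)$, $i\in I_{R_j}$, in time $O((|P_i|+m_i)\log^2 n\loglog n)$. All denominators are nonzero because $\delta_j\ge 0$ and $\omega(p)>0$, and in the Word RAM model arithmetic on the weights and shifts is unit cost. Concretely I would: run the queries of Theorem~\ref{thm:rangesearching} to obtain all pairs $(i,j)$ with $i\in I_{R_j}$; bucket them by $i$ to read off, for each $i$, its list of shifts; apply Lemma~\ref{lem:multipoint_evaluation} once per $i\in I$; and finally sweep the pairs again, accumulating $\rho_i(\delta_j)$ into the answer for query $j$.

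For the running time, the construction and query phases of Theorem~\ref{thm:rangesearching} cost $O(ndB(n,d))$ and $O(m\,2^dB(n,d))$, and the latter also bounds the number of pairs, so $\sum_i m_i=\sum_j|I_{R_j}|=O(m\,2^dB(n,d))$; together with $\sum_i|P_i|=O(ndB(n,d))$ this gives
\[
	\sum_{i\in I}\bigl(|P_i|+m_i\bigr)~=~O\bigl(ndB(n,d)+m\,2^dB(n,d)\bigr)~=~O\bigl((n+m)\,2^dB(n,d)\bigr),
\]
using $d=O(2^d)$. Multiplying by the $\log^2 n\loglog n$ factor from Lemma~\ref{lem:multipoint_evaluation}, and adding the dominated cost of the bucketing and the final summation, yields the claimed bound $O((n+m)\,2^dB(n,d)\log^2 n\loglog n)$.

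The step that needs the most care is the bookkeeping that shows the costs do not blow up: the summation sets $I_{R_j}$ differ from query to query, so one cannot feed all queries into Lemma~\ref{lem:multipoint_evaluation} at once; instead one exploits that the range tree writes every $P\cap R_j$ as a disjoint union of only $O(2^dB(n,d))$ canonical subsets, and that the canonical subsets, although overlapping, have total size only $O(ndB(n,d))$, so both $\sum_i|P_i|$ and $\sum_i m_i$ stay controlled. An alternative route would be to first expand each $\rho_i$ into a single rational function of degree $|P_i|$ in coefficient representation (by divide-and-conquer with fast polynomial multiplication) and then apply fast multipoint evaluation; this achieves the same asymptotic bound, but it is unnecessary since the constant-degree formulation already matches the hypothesis of Lemma~\ref{lem:multipoint_evaluation}.
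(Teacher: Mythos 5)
Your proposal is correct and follows essentially the same route as the paper's proof: decompose each $P\cap R_j$ into canonical subsets via Theorem~\ref{thm:rangesearching}, group the resulting index--query pairs by canonical subset, evaluate each $\rho_i$ at its relevant shifts via Lemma~\ref{lem:multipoint_evaluation}, and aggregate, with the same bookkeeping bounds $\sum_i|P_i|=O(ndB(n,d))$ and $\sum_i m_i=O(m2^dB(n,d))$. No gaps.
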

\begin{proof}
	We compute the family $\mathcal{P} =\{ P_i\mid i\in I\}$ and
	the data structure given in Theorem~\ref{thm:rangesearching}.
	This takes $O(nd\cdot B(n,d))= O(n2^d\cdot B(n,d))$ time.
	For each $R_j$ we use the data structure to compute the 
	set $I_j$ of $O(2^d \cdot B(n,d))$ indices such that 
	$P\cap R_j = \bigsqcup_{i\in I_j} P_i$. For all $j\in [m]$ together
	this takes $O(m2^d \cdot B(n,d))$ time.
 
	Define the pairs $\Pi =\{ (i,j)\in I\times [m]\mid i\in I_j\}$
	and, for each $i\in I$, let $J_i = \{ j\in [m]\mid i\in I_j\}$.
	Note that $I_j$ are the fibers of $\Pi$ when we fix the second coordinate, while $J_i$
	are the fibers of $\Pi$ when we fix the first coordinate.
	From the data structure we can compute $\Pi$ and thus also the fibers $J_i$ for all $i$.
	We have
	\[
		\sum_{i\in I} |J_i| ~=~ |\Pi| ~=~ \sum_{j\in [m]} |I_j| ~=~ \sum_{j\in [m]} O(2^d B(n,d))
		~=~ O(2^d m \cdot B(n,d)).
	\]
	
	For each $i\in I$ we define the rational function
	\[
		\rho_i(x) ~=~ \sum_{p\in P_i} \frac{1}{x+\omega(p)}\, .
	\]
	For each $i\in I$, 	
	we use Lemma~\ref{lem:multipoint_evaluation} to evaluate $\rho_i(\delta_j)$
	for each $j\in J_i$. This means that, for each $i\in I$,
	we spend 
	\[
		O((|P_i| + |J_i|) \log^2 |P_i| \loglog |P_i|) ~=~ O((|P_i| + |J_i|) \log^2 n \loglog n) 
	\]
	time.
	For all $i\in I$ together we spend a total time of
	\begin{align*}
		\sum_{i\in I} O(|P_i| + |J_i|) \log^2 n \loglog n) ~&=~
		\left( \sum_{i\in I} |P_i| + \sum_{i\in I} |J_i| \right) O(\log^2 n \loglog n ) \\
		&=~
		\left( O(nd\cdot B(n,d)) + O(2^d m \cdot B(n,d)) \right) O(\log^2 n \loglog n )\\
		&=~ O((n+m)2^d B(n,d) \log^2 n \loglog n ).
	\end{align*}
	For each query $(R_j,\delta_j)$, since $P\cap R_j = \bigsqcup_{i\in I_j} P_i$,
	we have
	\[
		\ISW(R_j,\delta_j) ~=~ \sum_{p\in P\cap R_j}~ \frac{1}{\delta_j+\omega(p)}
						~=~ \sum_{i\in I_j} \sum_{p\in P_i}~ \frac{1}{\delta_j+\omega(p)} 
						~=~ \sum_{i\in I_j}~ \rho_i(\delta_j).
	\]
	Since the values $\rho_i(\delta_j)$ have been already computed for all $(i,j)\in \Pi$,
	the computation of $\ISW(R_j,\delta_j)$ for all $j$ takes additional 
	$O(|\Pi|)=O(2^d m \cdot B(n,d))$ time.
\end{proof}

\subsection{Across a separation}
Let $G$ be a graph with $n$ vertices and let $A,B,S$ be a separation of $G$.
Let $k$ be the size of the separator; thus $k=|S|$.
Our objective here is to compute $\IGL(A,B)$ efficiently. 
For the rest of the section, we assume that
$G$ and the partition $A,B,S$ are fixed.

We fix an order on the vertices of the separator $S$ and denote them by $s_1,\dots,s_k$.
For each $b\in B$, we want to partition $A$ into groups depending on which vertices of $S$
belong to a shortest path from $b$. To obtain a partition, we have to be careful
to assign each vertex of $A$ to one group. For this we use the minimum index
of the vertex in the separator that is contained in some shortest path. 
More precisely, for each $b\in B$ and each $i\in [k]$
we define the set
\begin{align*}
	A(b,i) ~:=~ \{ a\in A\mid & d_G(a,b) < d_G(a,s_j) + d_G(s_j,b) ~~~ \forall j<i,\\
							 & d_G(a,b) = d_G(a,s_i) + d_G(s_i,b), \\
							 & d_G(a,b) \le d_G(a,s_j) + d_G(s_j,b) ~~~ \forall j>i\}.	
\end{align*}
It is easy to see that the sets $A(b,i)$, $i\in [k]$, form a partition of $A$ for each $b\in B$.

Consider some index $i\in [k]$. 
Define the map $\phi^{(i)}=(\phi^{(i)}_j)_{j\in [k]}$ by
\begin{align*}
	\phi^{(i)}=(\phi^{(i)}_j)_{j\in [k]}\colon  A &\rightarrow \RR^k\\
						a &\mapsto (d_G(a,s_i)-d_G(a,s_j))_{j\in [k]}
\end{align*}
Note that the $i$-th coordinate of $\phi^{(i)}(a)$ is always $0$, and thus it does
not provide any information. 
We keep it to maintain slightly simpler notation.

For each vertex $b\in B$, we define the box 
$R^{(i)}(b)=I^{(i)}_1(b)\times\dots\times I^{(i)}_k(b)\subset \RR^k$,
where $I^{(i)}_j(b)$ is 
\[
	I^{(i)}_j(b) ~=~ \begin{cases}
			\bigl( -\infty,\, d_G(b,s_j)-d_G(b,s_i) \bigr) & \text{if $j<i$,}\\
			\RR & \text{if $j=i$,}\\
			\bigl( -\infty,\, d_G(b,s_j)-d_G(b,s_i) \bigr] & \text{if $j>i$.}
		\end{cases}
\]
 
\begin{lemma}
	A vertex $a\in A$ belongs to $A(b,i)$ if and only if $\phi^{(i)}(a)\in R^{(i)}(b)$.
\end{lemma}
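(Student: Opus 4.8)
The plan is to unfold both sides of the claimed equivalence into the explicit inequalities that define them and check they match coordinate by coordinate. First I would recall that, by the definition of $\phi^{(i)}$, the membership $\phi^{(i)}(a)\in R^{(i)}(b)$ means that for every $j\in[k]$ the $j$-th coordinate $d_G(a,s_i)-d_G(a,s_j)$ lies in the interval $I^{(i)}_j(b)$. For $j=i$ the interval is all of $\RR$ and the coordinate is $0$, so that constraint is vacuous on both sides; this matches the fact that $A(b,i)$ imposes no condition relating the $i$-th separator vertex to itself. For $j\neq i$, the condition $d_G(a,s_i)-d_G(a,s_j)\in I^{(i)}_j(b)$ rearranges to a comparison between $d_G(a,s_i)$ and $d_G(a,s_j)+\bigl(d_G(b,s_j)-d_G(b,s_i)\bigr)$, i.e. to a comparison between $d_G(a,s_i)+d_G(s_i,b)$ and $d_G(a,s_j)+d_G(s_j,b)$.

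So the key step is the following rewriting: adding $d_G(b,s_i)$ to both sides, the constraint for index $j<i$ becomes
\[
	d_G(a,s_i)+d_G(s_i,b) ~<~ d_G(a,s_j)+d_G(s_j,b),
\]
and for index $j>i$ it becomes the same with ``$\le$'' in place of ``$<$''. Then I would invoke the separator property recorded in the preliminaries: since $A,B,S$ is a separation, $d_G(a,b)=\min_{s\in S}\bigl(d_G(a,s)+d_G(s,b)\bigr)$. In particular $d_G(a,b)\le d_G(a,s_i)+d_G(s_i,b)$ always, and equality $d_G(a,b)=d_G(a,s_i)+d_G(s_i,b)$ holds precisely when $s_i$ achieves this minimum. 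I would argue that the conjunction over all $j\neq i$ of the rewritten inequalities is equivalent to saying that $s_i$ is the minimizer of $d_G(a,s_j)+d_G(s_j,b)$, with ties broken towards the smallest index — which is exactly the three-line description of $A(b,i)$ once one substitutes $d_G(a,b)=d_G(a,s_i)+d_G(s_i,b)$ into each of those lines.

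To make that last equivalence airtight, I would go in both directions. If $\phi^{(i)}(a)\in R^{(i)}(b)$, then $d_G(a,s_i)+d_G(s_i,b)\le d_G(a,s_j)+d_G(s_j,b)$ for all $j$ (strictly for $j<i$), so $s_i$ attains the minimum defining $d_G(a,b)$; hence $d_G(a,b)=d_G(a,s_i)+d_G(s_i,b)$, and plugging this equality back into the rewritten inequalities reproduces verbatim the three defining conditions of $A(b,i)$. Conversely, if $a\in A(b,i)$, then in particular $d_G(a,b)=d_G(a,s_i)+d_G(s_i,b)$, and substituting this into the defining inequalities of $A(b,i)$ yields exactly the rewritten per-coordinate constraints, hence $\phi^{(i)}(a)\in R^{(i)}(b)$. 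The only mildly delicate point — and the one I would be most careful about — is the bookkeeping of which inequalities are strict versus non-strict (the asymmetry between $j<i$ and $j>i$) and matching that against the half-open versus open intervals $I^{(i)}_j(b)$; everything else is a direct substitution. No further machinery beyond the separator identity from Section~\ref{sec:preliminaries} is needed.
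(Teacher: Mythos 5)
Your proof is correct and follows essentially the same route as the paper's: unfold both sides into per-coordinate inequalities, use the separator identity $d_G(a,b)=\min_{s\in S}\bigl(d_G(a,s)+d_G(s,b)\bigr)$ to substitute $d_G(a,b)=d_G(a,s_i)+d_G(s_i,b)$ in both directions, and match the strict/non-strict inequalities to the open/half-open intervals. If anything, your write-up makes the two directions of the equivalence more explicit than the paper's terse "this can be rewritten as" step.
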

\begin{proof}
	Vertex $a$ belongs to $A(b,i)$ if and only if
	\begin{align*}
		d_G(a,b) ~&<~ d_G(a,s_j) + d_G(s_j,b) ~~~ \forall j<i,\\
		d_G(a,b) ~&=~ d_G(a,s_i) + d_G(s_i,b), \\
		d_G(a,b) ~&\le~ d_G(a,s_j) + d_G(s_j,b) ~~~ \forall j>i.	
	\end{align*}
	Since each path from $a$ to $b$ passes through some vertex of $S$,
	this can be rewritten as 
	\begin{align*}
		d_G(a,s_i) + d_G(s_i,b) ~&<~ d_G(a,s_j) + d_G(s_j,b) ~~~ \forall j<i,\\
		d_G(a,s_i) + d_G(s_i,b) ~&\le~ d_G(a,s_j) + d_G(s_j,b) ~~~ \forall j>i.	
	\end{align*}
	Rearranging terms this is equivalent to 
	\begin{align*}
		\phi^{(i)}_j(a) ~&=~ d_G(a,s_i) - d_G(a,s_j) ~<~  d_G(s_j,b) -  d_G(s_i,b)~~~ \forall j<i,\\
		\phi^{(i)}_j(a) ~&=~ d_G(a,s_i) - d_G(a,s_j) ~\le~ d_G(s_j,b) -  d_G(s_i,b) ~~~ \forall j>i.	
	\end{align*}
	This last condition is precisely the condition for $\phi^{(i)}(a)\in R^{(i)}(b)$.
\end{proof}

\begin{lemma}
\label{lem:separation}
	Given a graph $G$ with $n$ vertices and $m$ edges,
	and a separation $A,B,S$ of $G$, where $S$ is the separator and has size $|S|=k\ge 2$, 
	we can compute $\IGL(A,B)$ in $O(km+n2^kk\cdot B(n,k-1)\cdot \log^2 n \loglog n)$ time.
\end{lemma}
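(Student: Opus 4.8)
The plan is to reduce the computation of $\IGL(A,B)$ to a collection of $\ISW$ queries so that Theorem~\ref{thm:ISW} applies directly. First I would compute all the distances that appear in the definitions of $\phi^{(i)}$ and $R^{(i)}(b)$: for each of the $k$ separator vertices $s_i$, run a single-source shortest path computation in $G$, obtaining $d_G(s_i,v)$ for every vertex $v$ (and in particular for every $a\in A$ and every $b\in B$). With positive edge lengths this takes $O(k\cdot(m+n\log n))$ time, or $O(km)$ if we are slightly more careful (and in any case this is the source of the $O(km)$ term in the stated bound). From these distances we also get, for every $b \in B$, the value $d_G(a,b) = \min_{j\in[k]}\bigl(d_G(a,s_j)+d_G(s_j,b)\bigr)$ for every $a\in A$ — although we will not actually need the individual values $d_G(a,b)$, only the partition structure encoded by the $A(b,i)$.

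Next I would set up, for each fixed $i\in[k]$, the point set $P^{(i)} = \{\phi^{(i)}(a)\mid a\in A\}\subset\RR^k$, where each point carries the weight $\omega(\phi^{(i)}(a)) = d_G(a,s_i)$; this weight is positive provided $a\neq s_i$, and since $a\in A$ and $s_i\in S$ are in disjoint parts of the separation, $a\neq s_i$ always holds, so all weights are genuinely positive. For each $b\in B$ I form the query $(R^{(i)}(b),\delta)$ with shift $\delta = d_G(s_i,b) \ge 0$. By the preceding lemma, $\phi^{(i)}(a)\in R^{(i)}(b)$ iff $a\in A(b,i)$, and for such $a$ we have $d_G(a,b) = d_G(a,s_i)+d_G(s_i,b) = \omega(\phi^{(i)}(a)) + \delta$, so
\[
	\ISW(R^{(i)}(b),\delta) ~=~ \sum_{\phi^{(i)}(a)\in R^{(i)}(b)} \frac{1}{\delta + \omega(\phi^{(i)}(a))}
	~=~ \sum_{a\in A(b,i)} \frac{1}{d_G(a,b)} ~=~ \IGL(A(b,i),b).
\]
Summing over $i\in[k]$ and $b\in B$, and using that $\{A(b,i)\}_{i\in[k]}$ partitions $A$ for each fixed $b$, gives $\sum_{i\in[k]}\sum_{b\in B}\IGL(A(b,i),b) = \sum_{b\in B}\IGL(A,b) = \IGL(A,B)$, which is exactly what we want.

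For the running time: for each of the $k$ values of $i$ we invoke Theorem~\ref{thm:ISW} with a point set of size $|A|\le n$ in dimension $k$ and with $m'=|B|\le n$ queries, costing $O((|A|+|B|)\,2^k\cdot B(n,k)\cdot\log^2 n\loglog n)$. Since $\phi^{(i)}$ always has a zero $i$-th coordinate, the points really live in an affine copy of $\RR^{k-1}$, so we can drop that coordinate and apply Theorem~\ref{thm:ISW} in dimension $k-1$ instead, replacing $B(n,k)$ by $B(n,k-1)$ and shaving the exponent; this is why the statement has $B(n,k-1)$ rather than $B(n,k)$ (and the hypothesis $k\ge 2$ ensures $k-1\ge 1$, so range searching in dimension $k-1$ makes sense). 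Summing over the $k$ choices of $i$ yields the claimed $O(km + n2^kk\cdot B(n,k-1)\cdot\log^2 n\loglog n)$. The only genuinely delicate point — and the one I would write out carefully rather than wave at — is verifying that the boxes $R^{(i)}(b)$, with their mixed open/closed/infinite sides chosen exactly as in the lemma, produce an honest \emph{partition} of $A$ (every $a$ lands in exactly one $A(b,i)$), since otherwise we would either double-count or miss some pairs; everything else is bookkeeping around the distance computations and a direct appeal to Theorems~\ref{thm:rangesearching} and~\ref{thm:ISW}.
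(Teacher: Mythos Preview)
Your proposal is correct and follows essentially the same approach as the paper: compute shortest-path trees from each separator vertex, for each $i\in[k]$ build the point set $\phi^{(i)}(A)$ with weights $d_G(a,s_i)$, invoke Theorem~\ref{thm:ISW} in dimension $k-1$ (dropping the identically-zero $i$-th coordinate) on the batch of queries $(R^{(i)}(b),d_G(s_i,b))$, and sum over $i$ and $b$ using the partition $A=\bigsqcup_i A(b,i)$. The partition property you flag as the delicate point is exactly what the paper establishes just before stating the lemma, so there is nothing further to verify there.
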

\begin{proof}
	Let $s_1,\dots,s_k$ be an enumeration of the vertices of $S$.
	We compute shortest path trees from each vertex $s_i$ of $S$. This takes $O(k(m+n\log n))$ time.
	After this, each distance $d_G(s_i,v)$, where $s_i\in S$ and $v\in V(G)$, 
	is available in constant time.
	
	Consider a fixed index $i\in [k]$.
	We compute the set of points $P^{(i)}=\phi^{(i)}(A)$ and assign to point $\phi^{(i)}(a)$
	weight $\omega(\phi^{(i)}(a))= d_G(s_i,a)$. 
	For each $b\in B$ we compute the description of the box $R^{(i)}(b)$.
	This takes $O(kn)$ time, for a single $i\in [k]$.

	Since for each $b\in B$ and each $a\in A(b,i)$ 
	the shortest path from $b$ to $a$ goes through $s_i$, we have
	\begin{align*}
		\IGL(b,A(b,i))~&=~ \sum_{a\in A(b,i)} \frac{1}{d_G(a,b)} 
				~=~ \sum_{a\in A(b,i)} \frac{1}{d_G(a,s_i)+d_G(s_i,b)}\\
				~&=~ \sum_{a\in A(b,i)} \frac{1}{\omega(\phi^{(i)}(a)) + d_G(s_i,b)}
				~=~ \ISW(R^{(i)}(b),d_G(s_i,b)).
	\end{align*}	
	Recall that $P^{(i)}$ is effectively a point set in $k-1$ dimensions because
	the $i$-th coordinate is identically $0$.
	Using Theorem~\ref{thm:ISW} for $P^{(i)}$ we can compute 
	$\IGL(b,A(b,i))=\ISW(R^{(i)}(b),d_G(s_i,b))$
	for all points $b\in B$ together in 
	\[
		O((|A|+|B|)2^k\cdot B(n,k-1)\cdot \log^2 n \loglog n)~=~ 
		O(n2^k\cdot B(n,k-1)\cdot \log^2 n \loglog n)
	\]
	time.
	We repeat the procedure for each $i\in [k]$, which adds a multiplicative
	factor of $O(k)$ to the running time.
	After this, we have the values $\IGL(b,A(b,i))$
	for all $i\in [k]$ and $b\in B$.
	
	Finally we use that, for each $b\in B$, the sets
	$A(b,i)$, $i\in [k]$, form a partition of $A$ to obtain
	\[
		\IGL(A,B) ~=~ \sum_{b\in B} \sum_{a\in A} \frac{1}{d_G(a,b)} 
				 ~=~ \sum_{b\in B} \sum_{i\in [k]} \sum_{a\in A(b,i)}\frac{1}{d_G(a,b)} 
				 ~=~ \sum_{b\in B} \sum_{i\in [k]} \IGL(b, A(b,i)).
	\]
	Since the $O(k|B|)=O(kn)$ values used in the last expression are already available,
	the result follows.
\end{proof}

Note that in the proof of the previous Lemma we can actually compute
the values $\IGL(b,A)$ for all $b\in B$ within the same time bound.

\subsection{Final algorithm}
We use the recursive approach of Cabello and Knauer~\cite{CabelloK09}.
The idea is that graphs of small treewidth have small, balanced separators
with the property that adding edges between all pairs of vertices of the separator
does not increase the treewidth.
Given such a separation $A,B,S$ of the graph
we use Lemma~\ref{lem:separation} to compute $\IGL(A,B)$, 
we also compute $\IGL(S,V(G)\setminus S)$ using shortest paths from $S$, and
then solve recursively the problems for $A$ and $B$. For the recursive problems,
we add edges between the vertices of the separator $S$ whose length
is equal to the length of the shortest path in $G$. Thus, the distances
between vertices in $A$ and $B$ are correct in the recursive calls.
Also, adding those edges does not increase the treewidth of the graphs
used in the recursion. A bit of care is needed to avoid that
distances between vertices of $S$ are counted more than once;
it is easy to handle this by subtracting the terms that are counted twice.

Now we have two regimes depending on whether we want
to assume that the treewidth is constant, as done in~\cite{CabelloK09},
or whether we want to consider the treewidth a parameter, 
as done in~\cite{BringmannHM20}.
This difference affects the time to find a tree decomposition and
a balanced separator.
In both cases we use that an $n$-vertex graph with treewidth $k$
has $O(kn)$ edges~\cite{b-pkagb-98}. 

\begin{theorem}
\label{thm:treewidth1}
	Let $k\ge 2$ be an integer constant.
	For graphs $G$ with $n$ vertices and treewidth at most $k$,
	we can compute $\IGL(G)$ in $O(n \log^{k+2}n \loglog n)$ time.
\end{theorem}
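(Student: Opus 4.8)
The plan is to use the divide-and-conquer scheme of Cabello and Knauer~\cite{CabelloK09}: at each level, split the current graph at a small balanced separator $A,B,S$, use Lemma~\ref{lem:separation} to account for the pairs with one endpoint in $A$ and one in $B$, handle the pairs that touch $S$ directly from shortest-path trees rooted at $S$, and recurse on the two sides after turning $S$ into a clique so that distances are preserved.

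Concretely, I would first invoke the structural properties of bounded treewidth. For a constant $k$, a tree decomposition of $G$ of width at most $k$ is computable in $O(n)$ time, $G$ has $O(n)$ edges, and from such a decomposition one obtains in $O(n)$ time a separation $A,B,S$ of $G$ in which $S$ is one of the bags (so $|S|\le k+1$) and $|A|,|B|\le\frac23 n$. Fixing an enumeration $s_1,\dots,s_{|S|}$ of $S$, I would run Dijkstra from each $s_i$; since $|S|=O(1)$ and $G$ has $O(n)$ edges this takes $O(n\log n)$ time and makes $d_G(s_i,v)$ available in constant time for every $i$ and every $v$. From these distances the quantity $\sum_{\{s,s'\}\in\binom S2}1/d_G(s,s')$ is obtained in $O(1)$ time, and Lemma~\ref{lem:separation}, applied to the separator $S$ (whose associated point sets live in dimension $\le k$, so that $B(n,k)=O(\log^k n)$ by Lemma~\ref{lem:B(n,d)}), computes $\IGL(A,B)$ in $O(n\log^{k+2}n\loglog n)$ time.

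For the recursion, I would let $G_A$ be $G[A\cup S]$ with an extra edge of length $d_G(s,s')$ added for every pair $s,s'\in S$, and define $G_B$ analogously. A shortest $u$--$v$ path of $G$ with $u,v\in A\cup S$ that enters $B$ does so only through $S$, so replacing each such excursion by the corresponding new edge shows $d_{G_A}(u,v)=d_G(u,v)$ for all $u,v\in A\cup S$; since the new edges lie inside the bag $S$, the restriction of the tree decomposition of $G$ to the $A$-side (keeping the bag $S$) is a tree decomposition of $G_A$ of width at most $k$, so the treewidth is unchanged. The same holds for $G_B$, and the recursion bottoms out on graphs of constant size, handled by direct all-pairs computation. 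Computing $\IGL(G_A)$ and $\IGL(G_B)$ recursively, I would combine them by partitioning $\binom V2$ into the pairs inside $A\cup S$, the pairs inside $B\cup S$, and the pairs with one endpoint in $A$ and one in $B$: the first two groups are counted exactly by $\IGL(G_A)$ and $\IGL(G_B)$ (the relevant distances being correct), the third by $\IGL(A,B)$, and the only overcount is that each pair inside $S$ is counted twice, whence
\[
	\IGL(G) ~=~ \IGL(G_A) + \IGL(G_B) + \IGL(A,B) - \sum_{\{s,s'\}\in\binom S2}\frac{1}{d_G(s,s')}.
\]

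It remains to bound the total time. Because $|A|,|B|\le\frac23 n$ and the separator contributes only $O(1)$ extra vertices to each of $G_A$ and $G_B$, the recursion has depth $O(\log n)$, and at every node the non-recursive work is dominated by the single call to Lemma~\ref{lem:separation}. I expect this running-time accounting to be the main obstacle: one has to organize the recursion with enough care — in the choice of the balanced separator and in keeping the $O(1)$ separator vertices from accumulating unboundedly across the levels — so that charging the work of each node to its vertices and summing over the recursion telescopes to $O(n\log^{k+2}n\loglog n)$ rather than losing an extra logarithmic factor; here I would follow the analysis of Cabello and Knauer~\cite{CabelloK09} closely. A lesser point that still needs checking is the bookkeeping behind the identity above, namely that the two recursive calls, run on graphs whose relevant distances coincide with those of $G$, together count each pair exactly once after the $\binom S2$ correction — a step the paper itself signals is routine.
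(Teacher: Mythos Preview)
Your approach is essentially the same as the paper's: Bodlaender to get a width-$k$ tree decomposition, a balanced separator from it, Lemma~\ref{lem:separation} across the separator, distances from $S$ to handle the pairs touching $S$, and the Cabello--Knauer recursion with $S$ turned into a clique in each side. The combination identity and the bookkeeping you spell out are exactly what the paper alludes to.

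The one substantive discrepancy is the size of the separator. You take $S$ to be a bag, so $|S|\le k+1$, which makes the point sets $(k)$-dimensional and gives $O(n\log^{k+2}n\loglog n)$ per level; you then worry (correctly) that the recursion will cost an extra $\log n$. The paper instead invokes \cite[Lemma~3]{CabelloK09} to obtain a balanced separator of size $k$, so the point sets are $(k-1)$-dimensional, $B(n,k-1)=O(\log^{k-1}n)$, and each level costs only $O(n\log^{k+1}n\loglog n)$; the recursion then contributes the final $\log n$ to reach the stated bound. In other words, the ``obstacle'' you flag is not resolved by a sharper recursion analysis but by shaving one vertex off the separator. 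With $|S|=k$ in hand, the straightforward level-by-level accounting you describe already gives the theorem.
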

\begin{proof}
	We obtain a tree decomposition of width $k$ in $O(n)$ time using
	the algorithm of Bodlaender~\cite{b-ltaft-96}, 
	which takes $O(2^{O(k^3 \log k)} n)=O(n)$ time, when $k$ is a constant.
	From the tree decomposition we can obtain a separation $A,B,S$,
	such that $A$ and $B$ contain $\Theta(n)$ vertices each, the separator
	$S$ has size $k$, and adding edges between the vertices of $S$ does not
	increase the treewidth; see~\cite[Lemma 3]{CabelloK09}.
	We use Lemma~\ref{lem:separation} to compute $\IGL(A,B)$.
	Using the estimate of Lemma~\ref{lem:B(n,d)} and that the number
	of edges is $m=O(kn)=O(n)$, we spend 
	\begin{align*}
		O(km+n2^kk\cdot B(n,k-1)\cdot \log^2 n \loglog n) ~&=~
		O(n \cdot \log^{k-1}n \cdot \log^2 n \loglog n) \\
		&=~
		O(n \log^{k+1}n\loglog n)
	\end{align*}
	time to compute $\IGL(A,B)$. 
	Because of the divide-and-conquer approach and because each $A$ and $B$ have
	a constant fraction of the vertices, 
	the recursion increases the running time by another logarithmic factor.
\end{proof}

\begin{theorem}
\label{thm:treewidth}
	For graphs $G$ with $n$ vertices and treewidth at most $k$,
	we can compute $\IGL(G)$ in $n^{1+\eps} 2^{O_\eps(k)}$ time, for any $\eps>0$.
\end{theorem}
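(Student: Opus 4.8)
The plan is to mirror the proof of Theorem~\ref{thm:treewidth1}, changing only the two ingredients that depended on $k$ being constant: the computation of the tree decomposition together with a balanced separator, and the arithmetic bookkeeping of $B(n,k-1)$. First I would invoke the approximation algorithm for treewidth that, given an $n$-vertex graph, in time $2^{O(k)}n$ either reports that the treewidth exceeds $k$ or produces a tree decomposition of width $O(k)$; alternatively one may use Bodlaender's $2^{O(k^3\log k)}n$ exact algorithm, which is still of the form $n\cdot 2^{O(k\log^{O(1)}k)}$. Either way, from such a decomposition one extracts, as in~\cite[Lemma 3]{CabelloK09}, a separation $A,B,S$ with $|A|,|B|=\Theta(n)$ and $|S|=O(k)$; call this size $k'$, so $k'=O(k)$ and $m=O(kn)$.

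Next I would apply Lemma~\ref{lem:separation} to this separation. Its running time is $O(k'm+n2^{k'}k'\cdot B(n,k'-1)\cdot\log^2 n\loglog n)$. Using the second bound of Lemma~\ref{lem:B(n,d)}, namely $B(n,d)=n^{\eps'}2^{O(d)}$ for every $\eps'>0$, together with $k'=O(k)$, the factor $2^{k'}k'\cdot B(n,k'-1)$ is $n^{\eps'}2^{O(k)}$, and the $\log^2 n\loglog n$ factor is absorbed into $n^{\eps'}$ as well. The term $k'm=O(k^2 n)$ is dominated. Hence computing $\IGL(A,B)$ for one separation costs $n^{1+\eps'}2^{O(k)}$ time. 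One also computes $\IGL(S,V(G)\setminus S)$ directly from $k'$ shortest-path trees, in $O(k(m+n\log n))=n\,2^{O(k)}$ time, and corrects for the distances among vertices of $S$ that would otherwise be double counted, exactly as in the discussion preceding Theorem~\ref{thm:treewidth1}.

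Then I would set up the recursion: recurse on $A$ and on $B$, in each case after adding edges between the vertices of $S$ with lengths equal to their pairwise distances in $G$, which preserves the relevant distances and does not increase the treewidth beyond $O(k)$. Since $|A|,|B|\le c\,n$ for a constant $c<1$, the recursion has depth $O(\log n)$, and the total work is $\sum_{\text{levels}} n^{1+\eps'}2^{O(k)}$, which telescopes to $O(\log n)\cdot n^{1+\eps'}2^{O(k)}$. Choosing $\eps'=\eps/2$ absorbs the extra $\log n$ (and the implicit constants in the $O(\eps')$ exponent, which is where one must be slightly careful) into $n^{\eps}$, giving the claimed $n^{1+\eps}2^{O(k)}$ bound.

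The main obstacle I anticipate is purely bookkeeping rather than conceptual: one must verify that replacing the exact separator of size $k$ by an approximate one of size $k'=O(k)$ does not blow up the $2^{O(k')}$ factor beyond $2^{O(k)}$ (it does not, since $O(k)$ is closed under constant multiples), and that every appeal to Lemma~\ref{lem:B(n,d)} uses a small enough $\eps'$ so that the accumulated polynomial slack over all $O(\log n)$ recursion levels, and over the $O(k)$ coordinates, still fits inside $n^{\eps}$. The shortest-path computations and the edge-count bound $m=O(kn)$ introduce only polynomial-in-$k$ factors, which are swallowed by $2^{O(k)}$. So the proof is essentially Theorem~\ref{thm:treewidth1} with the constant-$k$ estimate $B(n,k-1)=O(\log^{k-1}n)$ swapped for the parameterized estimate $B(n,k-1)=n^{\eps'}2^{O(k)}$, and with Bodlaender's constant-$k$ decomposition swapped for an $n\,2^{O(k\,\polylog k)}$-time decomposition.
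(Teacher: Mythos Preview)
Your proposal is correct and follows essentially the same route as the paper: use a $2^{O(k)}\tilde O(n)$-time approximation for treewidth to obtain a width-$O(k)$ decomposition and hence a separator of size $k'=O(k)$, apply Lemma~\ref{lem:separation} with the parameterized estimate $B(n,k'-1)=n^{\eps'}2^{O(k')}$ from Lemma~\ref{lem:B(n,d)}, and absorb the $O(\log n)$ recursion depth into the $n^{\eps}$ slack. The paper invokes specifically Bodlaender et al.~\cite{BodlaenderDDFLP16} (width $3k+4$ in $2^{O(k)}n\log n$ time), which is the ``approximation algorithm'' you allude to.

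One small factual slip: Bodlaender's exact algorithm runs in $2^{O(k^3\log k)}n$ time, and $k^3\log k$ is \emph{not} of the form $k\,\polylog k$; using it would only yield $n^{1+\eps}2^{O(k^3\log k)}$, not the claimed $2^{O(k)}$ dependence. So drop that alternative and stick with the approximation algorithm, which is what both you and the paper actually rely on.
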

\begin{proof}
	Bodlaender et al.~\cite{BodlaenderDDFLP16} give an algorithm that,
	for graphs of treewidth at most $k$, finds
	a tree decomposition of width $3k+4$ in $2^{O(k)}n\log n$ time.
	Given such a decomposition, we can obtain a separation $A,B,S$,
	such that $A$ and $B$ contain $\Theta(n-k)$ 
	vertices each, the separator
	$S$ has size $k'=(3k+4)+1=O(k)$, and adding edges between the vertices of $S$ does not
	increase the treewidth; see for example~\cite[Theorem 19]{b-pkagb-98}.
	We use Lemma~\ref{lem:separation} to compute $\IGL(A,B)$.
	Using the estimate of Lemma~\ref{lem:B(n,d)} and that the number
	of edges is $m=O(kn)$, we spend 
	\begin{align*}
		O(k'm+n2^{k'}k'\cdot B(n,k'-1)\cdot \log^2 n \loglog n) ~&=~
		O(k^2n + n2^{O(k)} \cdot n^{\eps}2^{O_\eps(k')} \cdot \log^2 n \loglog n) \\
		&=~
		n^{1+\eps} 2^{O_\eps(k)}
	\end{align*}
	time to compute $\IGL(A,B)$, for each $\eps>0$.
	The recursive calls add another logarithmic factor, which is absorbed 
	by the polynomial term $n^{1+\eps}$. (Actually, the logarithmic factor does not appear
	because the exponent of the polynomial is a constant strictly larger than $1$
	and that term dominates the recursive formulation of the running time.)
\end{proof}


\section{Planar graphs}
\label{sec:planar}

First we provide some tools for planar graphs, including $r$-divisions,
duality and Voronoi diagrams. 
Then we introduce a data structure to compute inverse shifted queries; 
this is our essential, new contribution. 
Finally we explain how to combine the data structure with 
pieces of an $r$-division, and how the whole algorithm works.

In the following we assume that $G=(V,E)$ is a fixed planar (or plane) graph 
and often drop the dependency on $G$ in the notation.

\subsection{Preliminaries on planar graphs}
A planar graph with $n$ vertices has $O(n)$ edges.
An embedding of a planar graph in the plane can be computed in linear time.
A \DEF{plane graph} is a planar graph together with a fixed embedding.
The embedding is usually described combinatorially using rotation systems:
it specifies for each vertex the circular ordering (clockwise, say)
of the edges incident to the vertex.
Adding edges of sufficiently large length to $G$, so that the distances
between vertices are not changed, we can assume that $G$ is triangulated, that is,
all faces of $G$ are triangles. Such addition of edges, without introducing
parallel edges, can be done in linear time.
Henceforth, we will assume that the planar graph $G$ is triangulated,
and thus $3$-connected. (When $n=3$, the graph is not $3$-connected, but the
case $n=O(1)$ is trivial.)

Let $G$ be a plane graph. A \DEF{piece} of $G$ is a subgraph of $G$ 
without isolated vertices. 
We assume in $P$ the embedding inherited from $G$.
The \DEF{boundary} of $P$, denoted by $\partial P$, is the set of vertices
of $P$ that are incident to some edge of $E\setminus E(P)$.
The \DEF{interior} of $P$ is $V(P)\setminus \partial P$.
Each path from a vertex of $P$ to a vertex of $V\setminus V(P)$ 
passes through some vertex of the boundary, $\partial P$.
Thus, $V\setminus V(P)$, $V(P)\setminus \partial P$, $\partial P$ 
is a separation of $G$. (It may be very unbalanced.)
A \DEF{hole} in a piece $P$ is a face of $P$ that is not a face of $G$.
Note that each boundary vertex of a piece $P$ must be incident
to some hole.

For any positive integer $r$, an \DEF{$r$-division with few holes} of an $n$-vertex
plane graph $G$ is a collection $\PP=\{ P_1,\dots,P_k \}$ of pieces of $G$ with the following properties:
\begin{itemize}
	\item each edge of $G$ belongs to precisely one piece $P_i\in \PP$;
	\item each piece $P_i\in \PP$ has at most $r$ vertices, $O(\sqrt{r})$ boundary vertices,
		and $O(1)$ holes;
	\item $k=|\PP|=O(n/r)$, that is, there are $O(n/r)$ pieces in $\PP$.
\end{itemize}
The first item implies that, if a vertex $v\in \partial P_i\in\PP$, then
$v$ belongs to the boundary of each piece that contains it.
Indeed, if such vertex $v$ belongs to two distinct pieces $P_i,P_j\in \PP$,
then $v$ is incident to an edge of $P_i$ and an edge of $P_j$ (pieces
do not have isolated vertices), and thus
belongs to the boundary of $P_j$ and the boundary of $P_i$. 
As a consequence, the interiors of distinct pieces are disjoint.
In particular, the sets 
$V(P_1)\setminus \partial P_1$, $\dots$, $V(P_k)\setminus \partial P_k$
and $\bigcup_{i\in [k]} \partial P_i$,
form a pairwise disjoint partition of $V$.

Klein, Mozes and Sommer~\cite{KleinMS13} give an algorithm to compute in $O(n)$ time 
an $r$-division with few holes, given a plane, triangulated graph $G$ and a value $r$.
A simpler algorithm with running time $O(n \log n)$ was already known 
before~\cite{FakcharoenpholR06,KleinS98}.

\begin{figure}[htb]
	\centering
	\includegraphics[page=1]{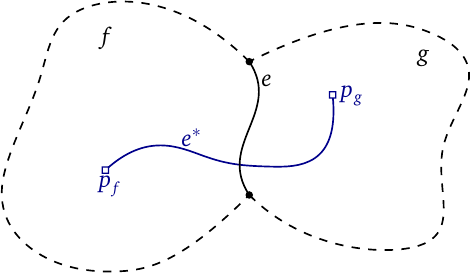}
	\caption{Embedding of the dual graph (in blue).}
	\label{fig:dual}
\end{figure}

Let $G$ be a plane graph and denote by $F(G)$ its set of faces.
The \DEF{dual} (multi)graph of $G$, denoted as $G^*$, has vertex set $V(G^*)=F(G)$
and edge set 
\[
	E(G^*)=\{ e^*=fg\mid f,g\in F(G), \text{ faces $f$ and $g$ of $G$ share an edge $e\in E(G)$}\}.
\]
We use $e^*$ to denote the edge of $G^*$ dual to the edge $e\in E(G)$.
The embedding of $G$ defines naturally an embedding of the dual graph $G^*$, as follows.
Each face $f\in F(G)$ is represented by a point $p_f$ chosen inside the face $f$,
and the dual edge $fg=e^*$ is drawn as a curve with endpoints $p_f$ and $p_g$
that crosses the embedding of $G$ exactly once, namely in the interior of the drawing of $e$.
See Figure~\ref{fig:dual}.
In general plane graphs, the dual graph is a multigraph because two faces
can share several edges. To keep the notation simpler, we will talk about
$G^*$ as a simple graph; adapting it to multigraphs is straightforward.

Let $\gamma$ be a cycle in the dual graph $G^*$ and let $s$ 
be a vertex of $G$.
Let $A$ and $B$ be the two connected components of 
$\RR^2\setminus \gamma$, and assume without loss of generality that $s$ belongs to $A$. 
We define the \DEF{exterior} of $\gamma$ with respect to $s$ as 
$\exterior_G(\gamma,s)=V(G)\cap B$. See Figure~\ref{fig:exterior}.
Thus, the exterior of $\gamma$ (with respect to $s$) is the set of vertices 
of $G$ such that any path connecting them to $s$ must intersect $\gamma$.
While in the plane there is a usual meaning of exterior and interior of a closed
Jordan curve, our definition is the natural parallel to the concept in the $2$-dimensional
sphere and will be more convenient in our application.

\begin{figure}[htb]
	\centering
	\includegraphics[page=2,width=.95\textwidth]{figures_IGL}
	\caption{Two examples showing the region that would contain the
		vertices $\exterior_G(\gamma,s)$. In the left, this would be the unbounded
		region of $\RR^2\setminus \gamma$, while in the right this would be
		the bounded region. (The red dashed line shows the clipping of $\RR^2$ for the drawing.)}
	\label{fig:exterior}
\end{figure}

\subsection{Voronoi diagrams in plane graphs}
The following material is taken from Gawrychowski et al.~\cite{GawrychowskiKMS21}, where there is 
a comprehensive treatment. See also Cabello~\cite{Cabello2019} for the
initial treatment, where only a particular case is discussed.

Let $G$ be a plane graph with (abstract) edge lengths
and let $P$ be a piece of $G$. 
(In fact, the graph $G$ is irrelevant for the forthcoming discussion; 
we use $P$ and $G$ to be consistent with the use later.) 
Let $S$ be a subset of vertices of $P$, the \DEF{sites},
and let $\delta\colon S\rightarrow \RR_{\ge 0}$
assign non-negative weights to the sites. Usually we only talk about
$S$ and treat $\delta$ as implicitly given with $S$.
The (additively-weighted) \DEF{Voronoi diagram} of $P$ with respect to $S$ (and $\delta$) 
is the family of subsets of vertices
\[
	\cell_P(s,S)~=~  \{ v\in V(P)\mid \forall s'\in S:~\delta(s)+d_P(s,v) \le \delta(s')+d_P(s',v)\}
	~~~~ \text{ for all $s\in S$.} 
\]
Here, $\cell_P(s,S)$ is the \DEF{Voronoi cell} for $s$. Note that we are using the 
distances in $P$; the graph $G$ is irrelevant.
A Voronoi cell may be the empty set.

It is convenient that the cells of the Voronoi diagram form a partition of $V(P)$.
For this, we can break ties in several consistent ways. For example, we may index
the sites and say that, in case of ties, the vertex gets assigned 
to the closest site with smallest index.
(There are other valid options.) Formally, we would enumerate
$S$ as $S=\{ s_1,\dots,s_k\}$ and define for each site $s_i$ the Voronoi cell
\begin{align*}
	\cell_P(s_i,S)~=~  \{ v\in V(P)\mid\, 
							& \forall j<i:~\delta(s_i)+d_P(s_i,v) < \delta(s_j)+d_P(s_j,v), \\
							& \forall j>i:~\delta(s_i)+d_P(s_i,v) \le \delta(s_j)+d_P(s_j,v)\}.
\end{align*}

It is easy to see that the Voronoi cell for $s\in S$ is \DEF{star-shaped} from $s$:
for each vertex $v\in \cell_P(s,S)$, the vertices in any shortest path in $P$ from $s$ to $v$
are also contained in $\cell_P(s,S)$. 
In particular, $\cell_P(s,S)$ forms a \emph{connected} subgraph of $P$.
In the forthcoming claims we will assume (for simplicity) that each Voronoi cell is nonempty.
Empty Voronoi cells have to be treated as a special, degenerate case.

Since each Voronoi cell is connected, each (non-empty) Voronoi cell 
can be described by a collection of cycles in the dual graph $P^*$, as follows. 
Here it is important that $P$ is a \emph{plane} graph, not just planar.
For each site $s\in S$, there is a family of cycles $H_P(s,S)$
in the dual graph $P^*$ such that 
\[
	\cell_P(s,S)~=~ V(P) \setminus \left( \bigsqcup_{\gamma\in H_P(s,S)} \exterior_P(\gamma,s)\right).
\]
See Figure~\ref{fig:holes}.
Recall that $\sqcup$ denotes that the sets in 
$\{ \exterior_P(\gamma,s)\mid \gamma\in H_P(s,S)\}$ are pairwise disjoint (for each single $s$).
We call the family of cycles $H_P(s,S)$ the \DEF{dual description} of $\cell(s,S)$.
(We use $H_P(s,S)$ because of their role as ``holes''.)
The existence of such cycles follows from the following fact for planar graphs $P$:
the removal of $E'\subset E(P)$ in $P$ leaves exactly two connected components 
if and only if the dual edges $(E')^*=\{ e^*\mid e\in E'\}$ define a simple cycle in the dual
graph $P^*$. More precisely, a dual edge $e^*$ belongs to some 
cycle of $H_P(s,S)$ if and only if $e$ has one vertex in $\cell_P(s,S)$
and one vertex outside $\cell_P(s,S)$. Thus, each connected component of
$G-\cell_P(s,S)$ defines one cycle of $H_P(s,S)$.
The behavior is parallel to what occurs with Voronoi diagrams in the plane
for some non-Euclidean metrics; think for example of a Voronoi diagram
in triangulated terrain, where some Voronoi cells may have holes. (Voronoi
cells are not necessarily simply connected in such an scenario.)
A Voronoi cell in such a Voronoi diagram is described by a collection of curves,
where each curve describes a ``hole'', that is, a part that does not belong
to the Voronoi cell.

\begin{figure}[htb]
	\centering
	\includegraphics[page=3,width=.95\textwidth]{figures_IGL}
	\caption{Two examples showing how the cycles of $H_P(s,S)$ define $\cell_P(s,S)$.
		In both examples $H_P(s,S)=\{ \gamma_1,\gamma_2,\gamma_3\}$ and the vertices
		of $\cell_P(s,S)$ would be contained in the marked region.
		(The red dashed line shows the clipping of $\RR^2$ for the drawing.)}
	\label{fig:holes}
\end{figure}

A \DEF{bisector} is the cycle in the dual graph separating the two (non-empty) Voronoi cells of
a two-site Voronoi diagram. That is, given two additively weighted sites $s$ and $t$,
we observe that $H_P(s,\{s,t\})$ consists of a single cycle because both
$\cell_P(s,\{s,t\})$ and $\cell_P(t,\{s,t\})$ are connected. We denote such
bisector as $\beta(s,t)$ and note that it consists of 
the set of edges dual to
\[
	\left\{ uv\in E(P)\text{ such that }  |\{u,v\}\cap \cell_P(s,\{s,t\})|=1 \right\}.
\]

When we have multiple sites, the cycles 
in the dual representation $H_P(s,S)$ describing the boundary 
of $\cell(s,S)$ are made of portions of bisectors. (Portions in general,
but some of them could be a whole bisector.) 
This is the case because, for each dual edge $e^*$ belonging 
to some cycle in $H_P(s,S)$, one vertex of $e$ belongs
to $\cell_P(s,S)$ while the other vertex of $e$ lies in $\cell_P(t,S)$ 
for some $t\in S \setminus \{ s \}$. 
Each cycle in $H_P(s,S)$ can be described by a sequence of \DEF{patches},
where each patch is a maximal portion of a bisector $\beta(s,t)$ for some 
site $t\in S\setminus \{s\}$. Such a patch
is described by a tuple $(s,t,e_1,e_2)$, 
where $t\in S\setminus \{ s\}$ and 
where $e_1,e_2\in \beta(s,t)$ are the first and last edge, respectively, 
in the portion of the bisector (in some prescribed direction for $\beta(s,t)$).
Thus, the description of one patch requires $O(1)$ space.

The combinatorial complexity of $H_P(s,S)$ or $\cell_P(s,S)$ is the number of patches
required for all cycles in $H_P(s,S)$.
The \DEF{combinatorial complexity of the Voronoi diagram} defined by the sites $S$ 
is the sum of the combinatorial complexities of $H_P(s,S)$ over all $s\in S$.
Using properties of planar graphs one can see that the combinatorial 
complexity of Voronoi diagram of $S$ is $O(|S|)$. Here it is important
to note that the size of the piece is not relevant.
By the computation of the Voronoi diagram we mean the computation of the
cycles $H_P(s,S)$ for all $s\in S$, where each cycle of each $H_P(s,S)$
is given as a sequence of patches. 
(In the description of the Voronoi diagram one usually considers also the adjacency
relations between cells; we will not use such information explicitly in our algorithm, but it
is implicit in the description of the patches because they contain information about 
which bisectors are being used.)

Cabello provided a randomized algorithm to compute Voronoi diagrams efficiently,
after an expensive preprocessing, in a particular case when the sites are cofacial. 
Gawrychowski et al.~improved the result in several aspects, allowing more
general families of sites, making the preprocessing faster and the construction deterministic,
as follows.

\begin{theorem}[Gawrychowski et al.~\cite{GawrychowskiKMS21}]
\label{thm:VDplanar}
Let $P$ be a plane graph with $r$ vertices, let 
$S$ be a set of $b$ vertices in $P$, 
assume that the vertices of $S$ can be covered with $O(1)$ faces of $P$,
and such covering with $O(1)$ faces is given.
After $\tilde O(rb^2)$ preprocessing time we can handle the following queries:
for any given weights $\delta(s)$ assigned to each $s\in S$ at query time,
the additively-weighted Voronoi diagram of $P$ with respect to $S$ can be computed
in $\tilde O(b)$ time.
\end{theorem}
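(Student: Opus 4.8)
The plan is to build, following Cabello~\cite{Cabello2019} and Gawrychowski et al.~\cite{ghmsw18}, a preprocessing structure consisting of a distance oracle for the sites together with a navigation structure for bisectors, and then at query time assemble the Voronoi diagram by divide-and-conquer, exploiting the fact quoted above that its combinatorial complexity is only $O(b)$. Since a Voronoi diagram whose sites lie on $O(1)$ faces differs from the one-face case only by a constant amount of bookkeeping, I would first develop everything assuming that all of $S$ lies on a single face $h$ of $P$, and then glue the $O(1)$ cases together.

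\textbf{Preprocessing.} From $h$ I build, in $\tilde O(r)$ time, Klein's multiple-source shortest-path (MSSP) structure: it stores the shortest-path trees $T_s$ of all $s\in S$ in a persistent dynamic-tree representation, so that for any $s\in S$ and $v\in V(P)$ the distance $d_P(s,v)$ is retrievable in $O(\log r)$ time, and the trees $T_s,T_{s'}$ of two cyclically consecutive sites on $h$ differ in $O(1)$ edges amortized. Next, for each of the $O(b^2)$ ordered pairs $(s,t)$ of sites I precompute, from $T_s$ and $T_t$, a structure that, once the weight difference $\delta(s)-\delta(t)$ is revealed at query time, lets us navigate and binary-search along the bisector $\beta(s,t)$ in $\polylog r$ time per operation; in particular it locates where along $\beta(s,t)$ a prescribed third site becomes closer than $s$. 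Building one such structure costs $\tilde O(r)$, so the $O(b^2)$ of them cost $\tilde O(rb^2)$ in total; this is what dominates the preprocessing bound, the MSSP structure being cheaper.

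\textbf{Query.} Given weights $\delta$, I order the sites along $h$, split them into two contiguous arcs $S_1,S_2$, recursively compute the Voronoi diagrams of $P$ with respect to $S_1$ and to $S_2$ (each output as a list of patches), and merge. The merged diagram is obtained by tracing the \emph{merge curve} --- the union of dual cycles separating the vertices closer to $S_1$ from those closer to $S_2$ --- and then clipping each cell of $\cell_P$ of the two recursive diagrams to the correct side. A generic patch of the merge curve lies on a bisector $\beta(s,t)$ with $s\in S_1$ and $t\in S_2$, and its endpoints are Voronoi vertices, i.e.\ faces of $P$ whose three incident vertices lie in three distinct cells; each such endpoint is found by a binary search on the precomputed structure for $\beta(s,t)$, using $O(\log r)$-time distance queries to test the tie-condition, in $\tilde O(1)$ amortized time. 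Because at every level of the recursion the diagrams produced have total complexity $O(b)$, the total number of patches traced is $O(b\log b)$, so the query runs in $\tilde O(b)$ time. Alternatively one can use the randomized incremental construction of~\cite{Cabello2019}: insert the sites one at a time, locate one vertex of the new cell, trace its boundary using the same bisector structures, and bound the expected number of patches touched by $O(b)$ via backward analysis.

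\textbf{Main obstacle.} The crux, and the reason the preprocessing is superlinear, is that the diagram lives on the $\Theta(r)$ dual edges of $P$ yet must be produced in time depending only on $b=|S|$: no bisector may be traversed edge by edge, so every step along a cell boundary or a merge curve must be resolved by a binary search. Designing the per-pair structures so that they answer precisely the query ``find the first dual vertex along $\beta(s,t)$ at which a prescribed third site overtakes $s$, for the given weights'' in polylogarithmic time, and doing this consistently with the presence of $O(1)$ holes, with empty Voronoi cells, and with ties in the distances, is where essentially all the technical difficulty lies; the divide-and-conquer (or incremental) scaffolding built on top is comparatively routine.
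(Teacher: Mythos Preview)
The paper does not prove this theorem: it is quoted verbatim as a result of Gawrychowski et al.~\cite{ghmsw18} and used as a black box, with no argument supplied. There is therefore no ``paper's own proof'' to compare your proposal against.

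Your sketch is a reasonable high-level outline of how the construction in~\cite{ghmsw18} (and the earlier randomized version in~\cite{Cabello2019}) actually works --- MSSP for distance queries, per-pair bisector structures, and a divide-and-conquer merge along the cyclic order of the sites on the hole --- and you correctly identify the binary search along bisectors as the crux that allows output in $\tilde O(b)$ rather than $\tilde O(r)$. But since the present paper treats the theorem as an imported tool, the appropriate response here is simply to cite~\cite{ghmsw18} rather than to reprove it.
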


We will be using the result for pieces $P$ in an $r$-division of a plane graph $G$.
In such case we will have $b=O(\sqrt{r})$, the preprocessing
will take $\tilde O(r^2)$ time, 
and the computation of each Voronoi diagram takes $\tilde O(\sqrt{r})$ time.

\subsection{Inverse shifted queries}
\label{sec:inverseshiftedplanar}
Let $P$ be a piece with $r$ vertices in a plane graph $G$, and let $s$ be a fixed vertex of $P$.
Assume that each vertex $v$ of $P$ has a prescribed non-negative weight $\omega(v)$.
As a warm up for the forthcoming technique, we provide the following result.

\begin{lemma}
\label{le:wholepiece}
	For $t$ given positive values $\delta_1,\dots,\delta_t$,
	we can compute the values
	\[
		\sum_{v\in V(P)}\,\frac{1}{\delta_i+\omega(v)}
		~~~~ \text{for all $i\in[t]$}.
	\]
	in $O((r+t) \log^2 r \loglog r )$ time.
\end{lemma}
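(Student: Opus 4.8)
The plan is to reduce the computation directly to the batched rational-function evaluation of Lemma~\ref{lem:multipoint_evaluation}. First I would form the single rational function
\[
	\rho(x) ~=~ \sum_{v\in V(P)}\frac{1}{x+\omega(v)}\, ,
\]
and observe that the quantity we must output for each $i$ is exactly $\rho(\delta_i)$. Since each summand $1/(x+\omega(v))$ is a rational function of constant degree (degree $0$ numerator, degree $1$ denominator) in coefficient representation, Lemma~\ref{lem:multipoint_evaluation} applied with the $n=r$ rational functions $R_v(x)=1/(x+\omega(v))$ and the $m=t$ evaluation points $x_j=\delta_j$ immediately yields all the values $\sum_{v} R_v(\delta_j)=\rho(\delta_j)$ in time $O((r+t)\log^2 r\loglog r)$, which is exactly the claimed bound.

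The only thing to check is that the hypotheses of Lemma~\ref{lem:multipoint_evaluation} are met: the $R_v$ must be genuine rational functions with no division by zero at the query points, and they must be presented in coefficient representation. Both are routine. The weights $\omega(v)$ are non-negative and the shifts $\delta_i$ are strictly positive, so $\delta_i+\omega(v)>0$ for every $v$ and every $i$; hence no denominator vanishes and $\rho$ is well defined at each $\delta_i$. The coefficient representation of $R_v$ is trivially $(\,(1),(\omega(v),1)\,)$, obtainable in $O(1)$ time per vertex, so converting the input into the form required by the lemma costs $O(r)$ time in total, absorbed into the stated bound. (If one prefers, one can first expand $\rho$ into a single $P(x)/Q(x)$ with $\deg P,\deg Q\le r$ by a balanced product tree and then do fast multipoint evaluation; this is precisely the proof of Lemma~\ref{lem:multipoint_evaluation}, so invoking the lemma directly is cleaner.)

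There is essentially no obstacle here — this lemma is a ``warm up,'' and its role is to isolate the one algebraic idea (package the inverse distances of a whole vertex set as a constant-degree-per-term rational function and batch-evaluate) before it is combined with the Voronoi-diagram machinery, where the real difficulty lies in restricting the sum to a single Voronoi cell rather than all of $V(P)$. For the present statement the entire content is the substitution $x\mapsto\delta_i$ together with one application of Lemma~\ref{lem:multipoint_evaluation}.
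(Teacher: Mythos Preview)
Your proposal is correct and follows exactly the same approach as the paper: define $\rho(x)=\sum_{v\in V(P)}\frac{1}{x+\omega(v)}$ and apply Lemma~\ref{lem:multipoint_evaluation} to evaluate it at $\delta_1,\dots,\delta_t$. Your additional remarks verifying well-definedness and the coefficient representation are fine but not strictly needed here.
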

\begin{proof}
	Consider the rational function 
	$\rho(x)=\sum_{v\in V(P)}\,\frac{1}{x+\omega(v)}$.
	We evaluate $\rho(x)$ at $x=\delta_i$ for all $i\in [t]$.
	Using Lemma~\ref{lem:multipoint_evaluation}, we can compute
	all these values in $O((r+t)\log^2 r \loglog r)$ time.
\end{proof}

In general we will be considering similar sums as in the previous result,
but for subsets of vertices defined by cycles in the dual graph.
Recall that for a cycle $\gamma$ in the dual graph $P^*$ and
a vertex $s$ in $P$, 
$\exterior_P(\gamma,s)$ is the set of vertices of $P$ that are on the opposite
side of $\gamma$ than $s$.

Given a cycle $\gamma$ in $P^*$, 
and a positive weight $\delta$ (the \emph{shift}),
we define the \DEF{inverse shifted-weight} (ISW) values
\begin{align*}
	\ISW_{P,s}(\gamma,\delta) ~&=~ \sum_{v\in \exterior_P(\gamma,s)} \,\frac{1}{\delta+\omega(v)}.
\end{align*}
Note that we are adding the inverse of $\delta+\omega(v)$, and we are only taking
into account the vertices of $P$ outside $\gamma$.
Since $\omega(v)$ is non-negative an $\delta$ is positive,
the denominator is always positive, and thus the fractions are well-defined.

We are interested in the efficient computation of the values $\ISW_{P,s}$ 
for several pairs of the form $(\gamma,\delta)$. More precisely, we will handle
the batched version: computing $\ISW_{P,s}(\gamma,\delta)$ for several
given pairs $(\gamma,\delta)$, where $P$ and $s$ remain constant. 

In our application we will consider only cycles $\gamma$ 
with a particular structure that we explain now.
A cycle $\gamma$ in $P^*$ is \DEF{$s$-star-shaped} if the following holds:
for each $u,v\in V(P)$, if $u$ belongs to the shortest path in $P$ 
from $s$ to $v$ and $u\in \exterior_P(\gamma,s)$, then
$v\in \exterior_P(\gamma,s)$. (The intuition is that the interior of $\gamma$ 
with respect to $s$ has to be star-shaped from $s$.)

Let $\Xi(s)$ be the family of cycles in $P^*$ that are $s$-star-shaped. 
We will be interested in $\ISW_{P,s}$ queries for some cycles in $\Xi(s)$.
We will assume that the cycles $\gamma$ in $\Xi(s)$ are oriented in such a way that $s$
lies in the region of $\RR^2\setminus \gamma$ to the right of $\gamma$, as we walk
along $\gamma$. Thus, if $s$ is in the bounded region of $\RR^2\setminus \gamma$,
then $\gamma$ is oriented clockwise, and otherwise $\gamma$ is oriented counterclockwise.

In the following we consider the dual graph $P^*$ as an oriented graph that contains both orientations
of each edge. The word \emph{arc} is used for \emph{oriented} edges.
For a primal arc $\dart uv$ with face $a$ to the left and face $b$ to the right, 
we define its dual arc $(\dart uv)^*=\dart ab$. Thus, the dual arc crosses the primal
arc from left to right. See Figure~\ref{fig:dualarc}.
For simplicity we assume that each arc has different endpoints. It is easy
to enforce this or modify the discussion to the general case.
Recall that $r$ is the number of vertices in $P$.

\begin{figure}[htb]
	\centering
	\includegraphics[page=4,width=.95\textwidth]{figures_IGL}
	\caption{Duality for arcs. In the left we have $(\dart uv)^*$ and in the right
		$(\dart vu)^*$.}
	\label{fig:dualarc}
\end{figure}

\begin{lemma}
\label{lem:planar_weights}
	We can associate to each dual arc $\dart ab$ of $P^*$ a set $U(\dart ab)$ of vertices
	of $P$ with the following properties: 
	\begin{itemize}
	\item For each oriented cycle $\gamma\in\Xi(s)$,
	\[
		\exterior_P(\gamma,s) ~=~ 
			\bigsqcup_{\dart ab\in E(\gamma)} 
				U(\dart ab).
	\]
	\item The sets $U(\dart ab)$ 
	for all $\dart ab\in E(P^*)$ together can be computed in $O(r^2)$ time.
	\end{itemize}
\end{lemma}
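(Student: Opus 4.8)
The plan is to route every vertex of $\exterior_P(\gamma,s)$ through the shortest-path tree $T$ rooted at $s$ in $P$ (using the same tie-breaking that underlies the $s$-star-shapedness condition), and to charge each such vertex to the unique tree edge along which its root path leaves the interior of $\gamma$. Concretely, I would root $T$ at $s$, and for each tree edge $e$ let $v_e$ be its endpoint farther from $s$ and $T_{v_e}$ the subtree of $T$ hanging below $v_e$, so that $V(T_{v_e})$ is exactly the set of vertices whose $s$-to-$\cdot$ path in $T$ uses $e$. I would then set $U\big((\dart{u}{v_e})^{*}\big):=V(T_{v_e})$, where $u$ is the endpoint of $e$ closer to $s$, and $U(\dart ab):=\emptyset$ for every dual arc that is not of this form. (Equivalently, one can set $U$ on \emph{both} dual arcs of each tree edge to $V(T_{v_e})$ and avoid the orientation bookkeeping entirely, since a simple cycle uses at most one of the two.)

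For correctness I would argue as follows. First, $s$-star-shapedness of $\gamma$ says precisely that $\exterior_P(\gamma,s)$ is closed downward in $T$: if $w\in\exterior_P(\gamma,s)$ then every descendant of $w$ in $T$ lies in $\exterior_P(\gamma,s)$. Hence along any root path $\pi_T(s,v)$ the vertices lying in $\exterior_P(\gamma,s)$ form a suffix of the path; since $s\notin\exterior_P(\gamma,s)$, the path crosses $\gamma$ (i.e.\ uses a primal edge whose dual arc lies on $\gamma$) \emph{at most once}, and by the Jordan-curve parity argument it crosses $\gamma$ \emph{exactly} once iff $v\in\exterior_P(\gamma,s)$, namely at the tree edge entering the topmost vertex of $\pi_T(s,v)$ lying in $\exterior_P(\gamma,s)$. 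Consequently $\exterior_P(\gamma,s)$ is the disjoint union of the subtrees $T_{v_e}$ over the tree edges $e$ crossed by $\gamma$, and these are genuinely disjoint because distinct ``entry'' vertices are incomparable in $T$. It then remains to check that the dual arc of such a crossed edge $e$ that actually appears on $\gamma$ is the one carrying $V(T_{v_e})$: walking along $\gamma$ the site $s$ lies to the right, so at $e$ the endpoint $u$ (which is on the $s$-side, being the parent of the entry vertex $v_e$) lies to the right of the traversed dual arc, and by the stated left/right duality convention that arc is exactly $(\dart{u}{v_e})^{*}$. Summing $U$ over $E(\gamma)$ then telescopes to $\exterior_P(\gamma,s)$, the empty $U$'s contributing nothing.

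For the running time I would compute $T$ by any shortest-path algorithm in $O(r\log r)$ time (the graph has $O(r)$ edges), and then materialize the sets $V(T_{v_e})$ by a single post-order traversal of $T$, each vertex's list being the concatenation of its children's lists together with the vertex itself. The total work and output size is $\sum_{e}|V(T_{v_e})|=\sum_{w\in V(P)}\bigl(\mathrm{depth}_T(w)+1\bigr)=O(r^2)$, which dominates; copying these lists onto the $O(r)$ relevant dual arcs and emptying the remaining $O(r)$ dual arcs stays within $O(r^2)$. I expect the only real friction to be the topological bookkeeping: carefully deducing ``at most one crossing, exactly one iff in the exterior'' from the definition of $s$-star-shapedness, and matching the orientation of $\gamma$ with the primal/dual left--right convention so that the telescoping picks out the intended dual arc; the ``define $U$ on both orientations'' variant trades a little redundancy for sidestepping the latter point.
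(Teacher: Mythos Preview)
Your proposal is correct and follows essentially the same approach as the paper: define $U$ via subtrees of the shortest-path tree $T_s$ rooted at $s$, use $s$-star-shapedness to conclude that the tree edges crossed by $\gamma$ are pairwise incomparable (so the subtrees are disjoint), and use a parity/Jordan argument to see that each exterior vertex is captured exactly once. Your treatment of the orientation check (matching the left/right duality convention to the ``$s$ on the right'' orientation of $\gamma$) is in fact more explicit than the paper's, which simply states the definition $U((\dart uv)^*)=V(T_{\dart uv})$ and leaves the orientation consistency implicit.
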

\begin{proof}
	Fix a shortest path tree $T_s$ from $s$ in $P$
	and orient the edges as arcs away from the root $s$.
	For each arc $\dart uv$ of $T_s$, 
	let $T_{\dart uv}$ be the subtree of $T_s-uv$ that contains $v$.
	For each arc $\dart uv$ of $T_s$, we define $U((\dart uv)^*)=V(T_{\dart uv})$.
	For all other arcs $\dart ab$ of $P^*$ we define $U(\dart ab)=\emptyset$.
	See Figure~\ref{fig:setsU} for the definition and intuition in the forthcoming
	argument.
	
	\begin{figure}[tbh]
		\centering
		\includegraphics[page=5,width=.95\textwidth]{figures_IGL}
		\caption{Left: The definition of $T_{\dart uv}$ and $U((\dart uv)^*)=U(\dart ab)$
			for an arc $\dart uv$ of $P$ and the dual arc $\dart ab$ in $P^*$.
			Right: The intuition in the construction. 
			The blue clockwise cycle $\gamma=\dart ab,\dart bc,\dots$ is in the dual graph $P^*$.		
			In this example $s$ is in the bounded region of $\RR^2\setminus \gamma$
			and $U(\dart{c}{d})=\emptyset$.}
		\label{fig:setsU}
	\end{figure}

	The description of the sets $U(\cdot)$ we gave readily leads to an
	algorithm to compute them with time complexity $O(r^2)$ because $P$ has $O(r)$ edges.
	
	Consider a cycle $\gamma\in \Xi(s)$, that is, $\gamma$ is $s$-star-shaped. 
	Then the edges of $T_s$ crossed by $\gamma$ are not in any ancestor-descendant relation.
	That is, in any $s$-to-leaf path in $T_s$ there is at most one edge whose
	dual edge appears in $\gamma$.
	Indeed, if there would be two edges $e_1$ and $e_2$ of $T_s$ crossed by $\gamma$, 
	then along the path in $T_s$ between $e_1$ and $e_2$ would get an alternation
	interior-exterior-interior with respect to $\gamma$ and $s$, 
	and we would get a contradiction with the property that $\gamma$ is $s$-star-shaped.
	
	Since the edges of $T_s$ crossed by $\gamma\in \Xi(s)$ are not in any ancestor-descendant relation,
	it follows from the construction that the sets $U(\dart ab)$ for all $\dart ab\in E(\gamma)$
	are pairwise disjoint. 
	
	Consider any vertex $v\in V(P)$.
	If $v\in \exterior_P(\gamma,s)$, then the path in $T_s$ from $s$
	to $v$ crosses $\gamma$ exactly once, and thus $v$ belongs to exactly one set 
	$U(\dart ab)$ with $\dart ab\in E(\gamma)$.
	If $v\notin \exterior_P(\gamma,s)$, then the path in $T_s$ from $s$
	to $v$ does not cross $\gamma$, and thus $v$ does not belong to any
	$U(\dart ab)$ with $\dart ab\in E(\gamma)$.
\end{proof}

Using ideas similar to those used in segment trees~\cite[Section 10.3]{bkos-08} we obtain the following.
	
\begin{lemma}
\label{lem:a_cycle}
	Assume that the sets $U(\dart ab)$ of Lemma~\ref{lem:planar_weights} are already computed.
	Let $\gamma$ be an oriented cycle in $\Xi(s)$.
	Assume that we are given $t$ pairs $(\pi_1,\delta_1),\dots, (\pi_t,\delta_t)$,
	such that each $\pi_i$ is a subpath of $\gamma$ specified by the starting and ending arc,
	and each $\delta_i>0$.
	In $O((r+t)\log^3 r \loglog r)$ time we can compute the values
	\[
		\partsum(\pi_i,\delta_i) ~:=~ \sum_{\dart ab\in E(\pi_i)} 
				\left(~\sum_{v\in U(\dart ab)}~\frac{1}{\delta_i+\omega(v)}~\right)
		~~~~ \text{for all $i\in[t]$}.
	\]
\end{lemma}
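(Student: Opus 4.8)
The plan is to build a segment tree over the edges of the oriented cycle $\gamma$, exactly as one does for interval stabbing queries, but where the secondary structure stored at each node is a rational function rather than a numeric aggregate. First I would fix the cyclic order of the arcs along $\gamma$, say $\dart{a_0}{a_1}, \dart{a_1}{a_2},\dots,\dart{a_{\ell-1}}{a_0}$ where $\ell = |E(\gamma)| = O(r)$, and cut the cycle at one arbitrary arc so that a subpath $\pi_i$ of $\gamma$ becomes either one contiguous interval of arc-indices or (if it wraps around the cut point) the union of two such intervals; handling two intervals only doubles the number of queries, so without loss of generality each $\pi_i$ is a contiguous range of arc-indices. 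Then I build a balanced segment tree (a balanced binary tree over the $\ell$ leaves, one leaf per arc of $\gamma$) of height $O(\log r)$. To each node $\nu$ of this tree I associate the set $W(\nu) = \bigsqcup_{\dart ab \in E(\nu)} U(\dart ab)$, where $E(\nu)$ is the set of arcs stored in the subtree rooted at $\nu$; because $\gamma\in\Xi(s)$, Lemma~\ref{lem:planar_weights} guarantees the sets $U(\dart ab)$ over arcs of $\gamma$ are pairwise disjoint, so the $W(\nu)$ are honest disjoint unions and each vertex of $\exterior_P(\gamma,s)$ lies in $W(\nu)$ for exactly one leaf $\nu$, hence in $W(\nu)$ for exactly one node on each root-to-leaf path. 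Consequently $\sum_\nu |W(\nu)| = O(\ell \log r) = O(r\log r)$, since each vertex appears in at most one node per level and there are $O(\log r)$ levels.

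Next, for each node $\nu$ I define the rational function $\rho_\nu(x) = \sum_{v\in W(\nu)} \frac{1}{x+\omega(v)}$. Each such function, after expansion, has numerator and denominator of degree $|W(\nu)|$; computing all of them can be done by a bottom-up pass over the segment tree, since $\rho_\nu = \rho_{\nu_{\text{left}}} + \rho_{\nu_{\text{right}}}$ corresponds to adding two fractions, i.e. cross-multiplying. A standard divide-and-conquer product/sum argument (the same one underlying Lemma~\ref{lem:multipoint_evaluation}) shows the whole family $\{\rho_\nu\}$ can be assembled in $O((\sum_\nu|W(\nu)|)\log^2 r \loglog r) = O(r\log^3 r\loglog r)$ time. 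Then I process the queries: a standard segment-tree query decomposes the interval of arc-indices for $\pi_i$ into $O(\log r)$ canonical nodes $\nu$, and $\partsum(\pi_i,\delta_i) = \sum_{\nu\in \mathrm{canon}(\pi_i)} \rho_\nu(\delta_i)$. I collect, for each node $\nu$, the list of shifts $\delta_i$ such that $\nu$ is a canonical node of query $i$; there are $O(t\log r)$ such (node, shift) incidences in total. For each $\nu$ I invoke Lemma~\ref{lem:multipoint_evaluation} with the single rational function $\rho_\nu$ (constant "number" of functions but large degree — here one must be slightly careful: Lemma~\ref{lem:multipoint_evaluation} is stated for constant-degree functions, so I would instead directly use fast multipoint polynomial evaluation on the degree-$|W(\nu)|$ numerator and denominator separately) to evaluate $\rho_\nu$ at all of its assigned shifts in $O((|W(\nu)| + t_\nu)\log^2 r\loglog r)$ time, where $t_\nu$ is the number of shifts assigned to $\nu$. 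Summing over $\nu$ gives $O((r\log r + t\log r)\log^2 r\loglog r) = O((r+t)\log^3 r\loglog r)$, matching the claimed bound; a final pass adds up, for each $i$, the $O(\log r)$ already-computed values $\rho_\nu(\delta_i)$.

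The main obstacle I anticipate is the mismatch between what Lemma~\ref{lem:multipoint_evaluation} literally provides (many functions of constant degree) and what the segment-tree nodes need (one function of large degree evaluated at many points). This is not a real difficulty — the underlying tool is just fast multipoint evaluation of a polynomial of degree $D$ at $N$ points in $O((D+N)\log^2 D\loglog D)$ time, applied to the numerator $P_\nu$ and denominator $Q_\nu$ of $\rho_\nu$ separately and then dividing — but it does require restating or re-deriving the primitive at the right level of generality rather than citing the corollary as a black box. A secondary point of care is the cyclic-versus-linear issue for $\pi_i$: after cutting $\gamma$ at a fixed arc, a subpath that contains that arc splits into a prefix and a suffix of the linearized arc sequence, so each query spawns at most two interval-stabbing queries; this only changes constants. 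Finally, I would double-check the disjointness bookkeeping: the claim $\exterior_P(\gamma,s) = \bigsqcup_{\dart ab\in E(\gamma)} U(\dart ab)$ from Lemma~\ref{lem:planar_weights} is exactly what makes $\rho_\nu = \rho_{\nu_\text{left}} + \rho_{\nu_\text{right}}$ valid without double-counting, and it is what makes $\partsum(\pi_i,\delta_i)$ genuinely equal to the sum over the canonical decomposition, so this property is used twice and both uses should be spelled out.
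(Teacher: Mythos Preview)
Your approach is correct and essentially identical to the paper's proof: build a balanced binary tree over the arcs of $\gamma$, decompose each $\pi_i$ into $O(\log r)$ canonical subpaths, attach the rational function $\rho_\nu$ to each node, and batch-evaluate per node. The one simplification you are missing is that your ``main obstacle'' is a non-issue: at each node $\nu$ the paper applies Lemma~\ref{lem:multipoint_evaluation} \emph{directly}, feeding it the $|W(\nu)|$ constant-degree summands $\tfrac{1}{x+\omega(v)}$ together with the shifts assigned to $\nu$, so there is no need to first assemble $\rho_\nu$ in coefficient form or to fall back on raw multipoint evaluation of a high-degree polynomial.
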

\begin{proof}
	Let $e_1,\dots ,e_m$ be the arcs along $\gamma$, and note that $m=O(r)$.  
	We make a rooted balanced binary search tree $T$ 
	with $m$ leaves such that the leaf storing $e_i$ has search key $i$. 
	A left-to-right traversal of $T$ traverses the leaves containing $e_1,\dots ,e_m$ in that order.
	See Figure~\ref{fig:treeT}.
	Each node $z$ of $T$ represents a subpath of $\gamma$, denoted by $\gamma[z]$,
	which is the concatenation of arcs stored in leaves under $z$.
	Alternatively, we can define $\gamma[z]$ recursively:
	for each leaf $z$ that stores $e_i$, we define $\gamma[z]=e_i$,
	and for an internal node $z$ with left child $z'$ 
	and right children $z''$ the path
	$\gamma[z]$ is the concatenation of $\gamma[z']$ and $\gamma[z'']$.
	The paths $\gamma[z]$ for all nodes $z$ of $T$ are called \emph{canonical subpaths} of $\gamma$.

	\begin{figure}[tbh]
		\centering
		\includegraphics[page=6,width=.95\textwidth]{figures_IGL}
		\caption{Portion of the tree $T$ for the part $e_1,e_2,\dots,e_8$ of $\gamma$.
			In this example, $\gamma[z_1]=e_4$, $\gamma[z_2]=e_3,e_4$, 
			$\gamma[z_3]=e_5,\dots,e_8$ and $\gamma[z_4]=e_1,\dots,e_8$.}
		\label{fig:treeT}
	\end{figure}
	
	For each node $z$ of $T$, let $r_z$ be $\sum_{e_i\in \gamma[z]} |U(e_i)|$.
	Note that for each leaf $z$ of $T$ that stores $e_i$ we have $r_z=|U(e_i)|$,
	while for each internal node $z$ of $T$ with children $z'$ and $z''$
	we have $r_z=r_{z'} + r_{z''}$.
	Since the sets $U(e_1),\dots,U(e_m)$ are pairwise disjoint 
	because of Lemma~\ref{lem:planar_weights} and canonical paths for nodes 
	at the same level of the tree $T$ are arc-disjoint,
	we have $r_z\le r$ for all $z$ and $\sum_{z\in V(T)} r_z= O(r \log m) = O(r\log r)$.
	
	Each subpath $\pi_i$ of $\gamma$ is the union of $O(\log m)=O(\log r)$ canonical subpaths.
	That is, for each subpath $\pi_i$ there is a subset $Z_i$ of $O(\log r)$ nodes in $T$
	such that $\pi_i$ is the concatenation of the canonical subpaths $\gamma[z]$, $z\in Z_i$.
	Furthermore, the nodes $Z_i$ describing those canonical subpaths can be identified in $O(\log r)$
	time with a bottom-up traversal from the first and the last arcs of $\pi_i$.
	(Recall that we assume that $\pi_i$ is specified by the starting and ending arc.)
	This property is parallel to the property often used for 
	segment trees~\cite[Lemma 10.10]{bkos-08}.
	We compute in $O(t\log r)$ time the sets $Z_1,\dots,Z_t$.
	For each node $z$ of $T$, let $I_z=\{ i\in [t]\mid z\in Z_i\}$.
	The non-empty sets $I_z$ for all nodes $z$ together can be computed also in $O(t\log r)$ time.
	Note that 
	\[	
		\sum_{z\in V(T)} |I_z| ~=~ \sum_{i\in [t]} |Z_i| ~=~ \sum_{i\in [t]} O(\log r) ~=~ O(t\log r).
	\]
	
	For each node $z$ of $T$ we define the rational function
	\[
		\rho_z(x) ~=~ \sum_{\dart ab\in E(\gamma[z])} 
			~~ \sum_{v\in U(\dart ab)} \,\frac{1}{x+\omega(v)}.
	\]
	Note that $\rho_z$ is the sum of 
	\[
		\sum_{e_i\in E(\gamma[z])} |U(e_i)| ~=~ r_z
	\]
	rational functions, each of bounded degree.	

	For each node $z$ of $T$ we compute $\rho_z(\delta_i)$
	for each $i\in I_z$; that is, we evaluate $\rho_z(x)$ at $x=\delta_i$ for all $i\in I_z$.
	Using Lemma~\ref{lem:multipoint_evaluation} this takes
	$O((|I_z|+ r_z )\log^2 r_z \loglog r_z)=O((|I_z|+ r_z )\log^2 r \loglog r)$ time
	for each node $z$ of $T$.
	Thus, for all nodes $z$ of $T$ and all $i\in I_z$ together, we compute 
	$\rho_z(\delta_i)$ in time
	\begin{align*}
		\sum_{z\in V(T)} O((|I_z|+ r_z )\log^2 r \loglog r) ~&=~
		O(\log^2 r \loglog r) \sum_{z\in V(T)} (|I_z|+ r_z ) \\ &=~
		O(\log^2 r \loglog r) \Bigl( O(t\log r) +  O(r \log r) \Bigr) \\ &=~
		O((r+t)\log^3 r \loglog r).
	\end{align*}
	
	Note that for each given pair $(\pi_i,\delta_i)$, 
	since $\pi_i$ is the concatenation of $\gamma[z]$, $z\in Z_i$, we have 
	\[
		\sum_{\dart ab\in E(\pi_i)} 
				\left(~\sum_{v\in U(\dart ab)}~\frac{1}{\delta_i+\omega(v)}~\right)
			~=~ \sum_{z\in Z_i} ~~ \sum_{\dart ab\in E(\gamma[z])} 
				~~ \sum_{v\in U(\dart ab)} \frac{1}{\delta_i+\omega(v)}
			~=~ \sum_{z\in Z_i} ~~ \rho_z(\delta_i).
	\]
	Since the values $\rho_z(\delta_i)$ are already available, for each $i\in [t]$
	we spend $O(|Z_i|)= O(\log r)$ additional time.	The result follows.
\end{proof}

Here we have the final result that we will use.

\begin{theorem}
\label{thm:datastructure2}
	Let $P$ be a piece with $r$ vertices in a plane graph $G$, and let $s$
	be a vertex of $P$.
	Let $\Gamma=\{ \gamma_1,\dots,\gamma_\ell \}$ be a family of oriented cycles in $\Xi(s)$.
	Assume that we are given $t$ pairs $(\pi_1,\delta_1),\dots, (\pi_t,\delta_t)$,
	where each $\delta_i>0$, each $\pi_i$ is an element of $\Xi(s)$,
	and each $\pi_i$ is given as a concatenation of $k_i$ subpaths
	of cycles from $\Gamma$. Set $k=\sum_i k_i$. 
	We can compute $\ISW_{P,s}(\pi_1,\delta_1),\dots \ISW_{P,s}(\pi_t,\delta_t)$
	in $O(r^2 + (k+\ell r) \log^3 r \loglog r )$ time.
\end{theorem}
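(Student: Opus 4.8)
The plan is to reduce the whole computation to the single-cycle primitive of Lemma~\ref{lem:a_cycle}, preceded by a one-time preprocessing. First I would compute the sets $U(\dart ab)$ for all dual arcs $\dart ab$ of $P^*$ using Lemma~\ref{lem:planar_weights}; this costs $O(r^2)$ time and is the origin of the $r^2$ term in the bound. The key starting identity is that, since each query cycle $\pi_i$ lies in $\Xi(s)$, Lemma~\ref{lem:planar_weights} gives $\exterior_P(\pi_i,s)=\bigsqcup_{\dart ab\in E(\pi_i)} U(\dart ab)$, so
\[
	\ISW_{P,s}(\pi_i,\delta_i) ~=~ \sum_{\dart ab\in E(\pi_i)}~\sum_{v\in U(\dart ab)}\frac{1}{\delta_i+\omega(v)}.
\]
Moreover $\pi_i$ is given as a concatenation of $k_i$ subpaths $\sigma_{i,1},\dots,\sigma_{i,k_i}$ of cycles of $\Gamma$, so the arc set $E(\pi_i)$ is the disjoint union of the arc sets $E(\sigma_{i,l})$, and therefore $\ISW_{P,s}(\pi_i,\delta_i)=\sum_{l=1}^{k_i}\partsum(\sigma_{i,l},\delta_i)$, where $\partsum$ is precisely the quantity produced by Lemma~\ref{lem:a_cycle}. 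Hence it is enough to evaluate $\partsum(\sigma,\delta_i)$ for all $k=\sum_i k_i$ pairs $(\sigma,\delta_i)$ arising this way.

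Next I would group these $k$ pairs according to which cycle of $\Gamma$ the subpath belongs to: for each $j\in[\ell]$ let the bucket of $\gamma_j$ be the set of pairs whose subpath is a subpath of $\gamma_j$, and let $t_j$ be its size, so $\sum_{j\in[\ell]} t_j=k$. This bucketing takes $O(k+\ell)$ time, using that the input specifies, for each subpath, the cycle $\gamma_j$ it lies in together with its first and last arc. For each $j$ with $t_j\ge 1$ I would then call Lemma~\ref{lem:a_cycle} on the cycle $\gamma_j$ (which is $s$-star-shaped because $\Gamma\subseteq\Xi(s)$, so the hypotheses hold) with its $t_j$ subpath queries; this returns all the corresponding $\partsum$ values in $O((r+t_j)\log^3 r\loglog r)$ time. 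Summing over the at most $\ell$ cycles that carry a query yields
\[
	\sum_{j:\,t_j\ge 1} O\bigl((r+t_j)\log^3 r\loglog r\bigr) ~=~ O\bigl((\ell r+k)\log^3 r\loglog r\bigr).
\]

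Finally, for each $i\in[t]$ I would add up the $k_i$ already-computed values $\partsum(\sigma_{i,1},\delta_i),\dots,\partsum(\sigma_{i,k_i},\delta_i)$ to obtain $\ISW_{P,s}(\pi_i,\delta_i)$; over all $i$ this costs $O(k)$ time. Together with the $O(r^2)$ preprocessing this gives the claimed running time $O(r^2+(k+\ell r)\log^3 r\loglog r)$. The step that needs the most care is the reduction in the first paragraph: one must check that the $k_i$ supplied subpaths genuinely partition the arcs of the cycle $\pi_i$, so that each fraction over $\exterior_P(\pi_i,s)$ is counted exactly once, and that it is the $s$-star-shapedness of $\pi_i$ (and of each $\gamma_j$) that makes the relevant sets $U(\dart ab)$ pairwise disjoint---without it the identity $\exterior_P(\pi_i,s)=\bigsqcup U(\dart ab)$, and the analysis behind Lemma~\ref{lem:a_cycle}, both break. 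The other thing to note is that the $\ell r$ term is intrinsic to this approach: every queried cycle of $\Gamma$ pays a fixed $O(r\log^3 r\loglog r)$ cost to build and probe the balanced tree of Lemma~\ref{lem:a_cycle}, regardless of how many of the $k$ subpath-queries land on it. Since a cycle of $P^*$ has $O(r)$ arcs, the total arc count $m$ of $\Gamma$ is $O(\ell r)$ and does not appear separately in the bound.
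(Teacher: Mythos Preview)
Your proposal is correct and follows essentially the same approach as the paper: precompute the sets $U(\dart ab)$ via Lemma~\ref{lem:planar_weights}, bucket the $k$ subpath queries by the cycle $\gamma_j\in\Gamma$ they lie on, apply Lemma~\ref{lem:a_cycle} once per cycle, and then sum the returned $\partsum$ values along each $\pi_i$. The only cosmetic difference is that the paper iterates over all $j\in[\ell]$ rather than only the nonempty buckets, which does not change the bound.
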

\begin{proof}
	Let $\Pi_i$ be the family of subpaths of cycles from $\Gamma$ that are given
	to describe $\pi_i$. Let $\Pi=\cup_i \Pi_i$ be all the subpaths used
	in the description; we consider $\Pi$ as a multiset. For each $\varpi\in \Pi$, 
	let $i(\varpi)\in [t]$ be the index such that $\varpi\in \Pi_{i(\varpi)}$ and 
	let $j(\varpi)\in [\ell]$ be the index such that $\varpi$ is a subpath of $\gamma_{j(\varpi)}$.
	Let $\Gamma_j =\{ \varpi\in \Pi\mid j(\varpi)=j \}$, that is, the subpaths of $\gamma_j$
	that appear in the description of some $\pi_i$. Note that
	\[
		\sum_{j\in [\ell]} |\Gamma_j| ~=~ \sum_{i\in [t]} |\Pi_i| ~=~
		\sum_{i\in [t]} k_i ~=~ k.
	\]
		
	First we compute the sets $U(\dart ab)$ described in Lemma~\ref{lem:planar_weights}
	for all dual arcs $\dart ab$ in $P^*$. This takes $O(r^2)$ time.
	
	For each $j\in [\ell]$, we use Lemma~\ref{lem:a_cycle}
	with the cycle $\gamma_j$ and the pairs $(\varpi,\delta_{i(\varpi)})$, 
	for all $\varpi\in \Gamma_j$. This means that in time
	$O((r+|\Gamma_j|)\log^3 r \loglog r)$ we compute the sums
	\[
		\partsum(\varpi,\delta_{i(\varpi)}) ~=~\sum_{\dart ab\in E(\varpi)} 
				\left(~\sum_{v\in U(\dart ab)}\,\frac{1}{\delta_{i(\varpi)}+\omega(v)}~\right)
		~~~~ \text{for all $\varpi \in \Gamma_j$}.
	\]
	For all $j\in [\ell]$ together we spend
	\begin{align*}
		\sum_{j\in [\ell]} O((r+|\Gamma_j|)\log^3 r \loglog r) ~&=~ 
		O(\log^3 r \loglog r) \sum_{j\in [\ell]} (|\Gamma_j|+r) \\ &=~
		O( (k+\ell r) \log^3 r \loglog r )
	\end{align*}
	time and we obtain $\partsum(\varpi,\delta_{i(\varpi)})$ for all $\varpi\in \Pi$.
	This means that we obtain $\partsum(\varpi,\delta_i)$ for all $\varpi\in \Pi_i$
	and all $i\in [t]$.

	For each $(\pi_i,\delta_i)$, we use the properties of 
	$U(\dart ab)$ described in Lemma~\ref{lem:planar_weights},
	namely that $\exterior_P(\pi_i,s)$ is the \emph{disjoint} union of
	the sets $U(\dart ab)$ over all arcs $\dart ab \in E(\pi_i)$,
	to obtain
	\begin{align*}
		\ISW_{P,s}(\pi_i,\delta_i) ~&=~ 
				\sum_{v\in \exterior_P(\pi_i,s)} \,\frac{1}{\delta+\omega(v)}\\
				&=~\sum_{\dart ab\in E(\pi_i)} 
				\left(~\sum_{v\in U(\dart ab)}\,\frac{1}{\delta_i+\omega(v)}~\right)\\
				&=~ \sum_{\varpi\in \Pi_i}~~\sum_{\dart ab\in E(\varpi)} 
				\left(\sum_{v\in U(\dart ab)}\,\frac{1}{\delta_i+\omega(v)}\right)\\
				&=~ \sum_{\varpi\in \Pi_i} \partsum(\varpi,\delta_i).
	\end{align*}
	Thus, we can compute $\ISW_{P,s}(\pi_i,\delta_i)$ using an additional $O(|\Pi_i|)=O(k_i)$ time, 
	for each $i$.
\end{proof}

Lemma~\ref{le:wholepiece} and Theorem~\ref{thm:datastructure2}
will be combined to compute the sum of inverses over the vertices of a Voronoi cell
using simple inclusion-exclusion; see the end of the proof of Lemma~\ref{le:perpiece}.

\paragraph{Remark.} In our discussion we have assumed that we want to add the shifted-inverse
over vertices of $V(P)$ outside $\gamma$ with respect to a reference $s\in V(P)$. 
In fact, for any fixed $U\subset V(P)$,
we could also be considering the inverse-shifted queries 
\begin{align*}
	\ISW_{P,s}(\gamma,\delta) ~&=~ \sum_{v\in U\cap \exterior_P(\gamma,s)} \frac{1}{\delta+\omega(v)}.
\end{align*}
For example, if we would have red and blue vertices, we could consider sums over only
the blue vertices.

\paragraph{Remark.}
Similar results hold for arbitrary cycles, not just cycles from $\Xi(s)$,
with worse running time.
In Lemma~\ref{lem:a_cycle} we have used that the sets $U(\dart ab)$ are pairwise disjoint 
when the arcs $\dart ab$ come from a cycle of $\Xi(s)$. 
For arbitrary cycles one can use inclusion-exclusion;  
see for example Cabello~\cite[Section 3.1]{Cabello2019} for the relevant idea.
In such case, we can assume only that each $U(\dart ab)$ has $O(r)$ vertices
and, in the proof of Lemma~\ref{lem:a_cycle},
$\sum_{z\in V(T)} r_z= O(r^2)$.
The running time of Lemma~\ref{lem:a_cycle} for arbitrary cycles becomes 
$O((r^2+t\log r)\log^2 r \loglog r)$.
Similarly, Theorem~\ref{thm:datastructure2} holds for arbitrary cycles $\gamma_1,\dots,\gamma_\ell$,
but the running time becomes 
$O((k \log r+\ell r^2) \log^2 r \loglog r )$.

\subsection{Across a separation}
The next result handles the interaction between a piece $P$ of a plane graph $G$ 
and its complement. The boundary $\partial P$ of $P$ plays the role of the separation
because each path from $V(P)\setminus \partial P$ to $V(G)\setminus V(P)$ must
pass through $\partial P$.
We use Voronoi diagrams for planar graphs, 
as Cabello~\cite{Cabello2019} or Gawrychowski et al.~\cite{GawrychowskiKMS21},
to group the vertices of $V(P)$ depending on which boundary vertex is used
in the shortest path to them (for a fixed source).
To collect the relevant data from each Voronoi cell, we use the data structure
developed in Section~\ref{sec:inverseshiftedplanar}.

\begin{lemma}
\label{le:perpiece}
	Let $G$ be a plane graph with $n$ vertices and let $P$ be a piece of $G$ 
	with $r$ vertices, $b$ boundary vertices, and $O(1)$ holes.
	In $\tilde O(bn+ b^2 r^2)$ time we can compute
	\[
		\IGL\bigl( a,V(P) \bigr) ~=~ \sum_{v\in V(P)} \, \frac{1}{d_G(a,v)} ~~~~~~
		\text{for all $a\in V(G)\setminus V(P)$}.
	\]
\end{lemma}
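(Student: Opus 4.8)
The plan is to fix a source vertex $a\in V(G)\setminus V(P)$, observe that every shortest path from $a$ into $P$ enters through some boundary vertex, and therefore set up an additively-weighted Voronoi diagram on $P$ whose sites are the boundary vertices $S=\partial P$, with weights $\delta(s)=d_G(a,s)$ computed externally. Since $P$ has $O(1)$ holes, the $b$ boundary vertices are covered by $O(1)$ faces of $P$, so Theorem~\ref{thm:VDplanar} applies: after $\tilde O(r b^2)$ preprocessing, each Voronoi diagram (for a given weight vector) is computed in $\tilde O(b)$ time as a collection of cycles $H_P(s,S)$ in the dual $P^*$, each cycle given as $O(1)$-per-patch sequences of bisector pieces, with total combinatorial complexity $O(b)$. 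For each $a$, the cell $\cell_P(s,S)$ collects exactly the vertices $v\in V(P)$ whose shortest path from $a$ realizes $d_G(a,v)=\delta(s)+d_P(s,v)$, so $\IGL(a,V(P))=\sum_{s\in S}\sum_{v\in\cell_P(s,S)}\frac{1}{\delta(s)+d_P(s,v)}$. This is precisely the quantity I want to extract from each cell, and the weights $\omega(v):=d_P(s,v)$ depend on $s$, which is the subtlety to handle.

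Next I would observe that each Voronoi cell $\cell_P(s,S)$ is the complement in $V(P)$ of a disjoint union of exteriors $\exterior_P(\gamma,s)$ over the cycles $\gamma\in H_P(s,S)$ — and because each cell is star-shaped from $s$, these cycles lie in $\Xi(s)$. Hence I can evaluate the contribution of cell $s$ by taking the whole-piece sum via Lemma~\ref{le:wholepiece} with weights $\omega(v)=d_P(s,v)$ and shift $\delta(s)$, and then subtracting $\ISW_{P,s}(\gamma,\delta(s))$ for each hole-cycle $\gamma\in H_P(s,S)$ using Theorem~\ref{thm:datastructure2}. To invoke Theorem~\ref{thm:datastructure2} I need the hole-cycles presented as concatenations of subpaths of a fixed family $\Gamma$ of $s$-star-shaped cycles; the natural choice for $\Gamma$ is the set of bisectors $\beta(s,t)$, $t\in S\setminus\{s\}$, which are $s$-star-shaped and whose patches compose the cycles of $H_P(s,S)$ — so each hole-cycle is a concatenation of $O(b)$ such subpaths (its patches), giving $k=O(b)$ per query batch across the $O(1)$ cells-with-holes per $a$, and $\ell=|\Gamma|=O(b)$ bisectors each of total arc length $O(r)$. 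One must precompute these bisectors and the sets $U(\dart ab)$ from Lemma~\ref{lem:planar_weights}; note these $U$-sets and the bisectors depend on $s$, so the preprocessing of Theorem~\ref{thm:datastructure2} (the $O(r^2)$ term and the $O(\ell r)$ shortest-path-tree-based construction) must be redone once per site $s$, i.e.\ $b$ times, contributing $\tilde O(b r^2)$; the bisectors themselves for all pairs $(s,t)$ are computed within the Voronoi preprocessing of Theorem~\ref{thm:VDplanar}.

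Then I would handle all sources $a$ together: compute single-source shortest path trees in $G$ from each of the $b$ boundary vertices (total $\tilde O(bn)$ time), giving $d_G(a,s)$ for every $a\in V(G)$ and $s\in\partial P$ in $O(1)$ each — these are the query weights. For each $s\in\partial P$, across all $a\in V(G)\setminus V(P)$, the cell-$s$ contributions form a batch of $O(n)$ inverse-shifted-weight queries on the fixed piece $P$ with the fixed weight function $\omega(\cdot)=d_P(s,\cdot)$: Lemma~\ref{le:wholepiece} handles the whole-piece sums in $\tilde O(r+n)$, and Theorem~\ref{thm:datastructure2} handles the hole-subtractions; here the query family across all $a$ has $t=O(n)$ pairs and $k=O(n)$ subpaths total (since each $a$ contributes $O(b)$ patches from $O(1)$ cells), costing $\tilde O(r^2+(n+b r))$ per site $s$, plus the $\tilde O(rb^2)$ Voronoi preprocessing amortized once. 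Summing over the $b$ sites gives $\tilde O(bn+b r^2\cdot b)=\tilde O(bn+b^2 r^2)$, matching the claimed bound (absorbing the $\tilde O(rb^2)$ Voronoi preprocessing, since $b\le r$). A small point of care: the Voronoi diagram's tie-breaking must be chosen consistently so the cells partition $V(P)$, and empty cells are simply skipped; also, when $P$ is not triangulated or a cell is empty, the degenerate cases are handled as noted after the definition of Voronoi diagrams.

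The main obstacle I anticipate is the bookkeeping that makes the $U$-sets and the bisector family $\Gamma$ shared across all $n$ source queries for a given site $s$: the sets $U(\dart ab)$ and the bisectors $\beta(s,t)$ depend only on $s$ (and $P$), not on $a$, so they must be built once per $s$ — not once per $a$ — otherwise the running time blows up by a factor of $n$. Making this separation clean requires noticing that although the Voronoi diagram (which cycles form $H_P(s,S)$, and with which patch endpoints) changes with $a$, the underlying bisectors and the shortest-path-tree-from-$s$ structure inside $P$ do not, so the expensive $O(r^2)$ preprocessing of Lemma~\ref{lem:planar_weights} is incurred $b$ times total. The second delicate point is confirming that the hole-cycles of $H_P(s,S)$ are genuinely $s$-star-shaped (so Theorem~\ref{thm:datastructure2} applies with its better running time rather than the inclusion-exclusion fallback), which follows from star-shapedness of the Voronoi cell together with the duality between connected components of $P-\cell_P(s,S)$ and the cycles of $H_P(s,S)$.
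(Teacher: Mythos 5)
Your overall strategy is the paper's: per source $a$, build the additively-weighted Voronoi diagram on $P$ with sites $\partial P$ and weights $d_G(a,s)$ via Theorem~\ref{thm:VDplanar}; write each cell's contribution as a whole-piece sum (Lemma~\ref{le:wholepiece}) minus the exterior sums over its hole-cycles (Theorem~\ref{thm:datastructure2}); and batch by site $s$ rather than by source $a$, precomputing the $U$-sets and shortest path trees once per $s$. That matches the paper's proof essentially step for step, including the observation that the hole-cycles are $s$-star-shaped and the $\tilde O(bn)$ cost of shortest path trees from the boundary.

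There is, however, one genuine gap in how you set up the family $\Gamma$ for Theorem~\ref{thm:datastructure2}. You take $\Gamma$ to be ``the'' bisectors $\beta(s,t)$, $t\in S\setminus\{s\}$, with $\ell=O(b)$ cycles and $O(br)$ total arcs. But in the additively-weighted setting the bisector between $s$ and $t$ depends on the weight difference $\delta(s)-\delta(t)=d_G(a,s)-d_G(a,t)$, which changes with the source $a$. A single fixed cycle per pair therefore does not suffice: the patches describing $H(a,s)$ for different sources $a$ lie on different bisectors, and they would not be subpaths of any cycle in your $\Gamma$. The fix (and what the paper does) is to take for each pair $(s,t)$ the whole family $\tilde\Gamma(s,t)$ of combinatorially distinct bisectors that arise as the weight difference varies; a vertex $v$ switches sides exactly at the threshold $\delta(s)-\delta(t)=d_P(t,v)-d_P(s,v)$, so these bisectors are nested and there are only $O(r)$ of them per pair. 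This makes $\ell=O(br)$ and the total arc count $O(br^2)$ per site, and the $\ell r$ term of Theorem~\ref{thm:datastructure2} becomes $O(br^2)$ per site, i.e.\ $O(b^2r^2)$ overall --- which is precisely where the $b^2r^2$ in the claimed bound comes from. Your final bound coincides with the paper's, but your intermediate accounting ($\tilde O(r^2+n+br)$ per site, which would sum to $\tilde O(br^2+bn+b^2r)$) does not support it and reflects the missing family of nested bisectors.
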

\begin{proof}
	First we introduce some notation, express some properties,
	and then look into the algorithmic part.
	Let $A=V(G)\setminus V(P)$, $S=\partial P$ and  
	$\Omega = A\times S$. Note that $\Omega$ has $O(bn)$ pairs.
	The set $S$ is a separator between $A$ and $V(P)\setminus S$.
	We will systematically use $a$ to index vertices of $A$, $s$ for vertices of $S$,
	and $v$ for vertices of $P$.

	Consider one fixed $a\in A$ and 
	assign weight $\delta(s)=d_G(s,a)$ to each $s\in S$.
	Consider the corresponding Voronoi diagram $\{ \cell_P(s,S)\mid s\in S\}$,
	and recall that each Voronoi cell has a dual description: 
	a family of cycles $H_P(s,S)$ in $P^*$ such that 
	\begin{align*}
		\cell_P(s,S)= V(P)\setminus \left( \bigsqcup_{\gamma\in H_P(s,S)} \exterior_P(\gamma,s) \right).
	\end{align*}	
	This Voronoi diagram depends on $a$.
	To make this dependency on $a$ explicit, because we will be considering
	several different vertices $a\in A$ together, 
	we define $X(a,s)=\cell_P(s,S)$	and $H(a,s)=H_P(s,S)$. This notation drops the
	dependency on $P$ and $S$, which we may assume constant throughout this proof.
	On the other hand, the notation makes it clear that we consider a different Voronoi diagram
	for each $a\in A$, and each such Voronoi diagram has a cell for each $s\in S$
	whose additive weight $\delta(s)$ was set to $d_G(a,s)$.	
	With the new notation we have
	\begin{equation}
		\forall (a,s)\in \Omega:~~~ X(a,s)= V(P)\setminus \left( \bigsqcup_{\gamma\in H(a,s)} \exterior_P(\gamma,s) \right).
	\label{eq:1}
	\end{equation}
	
	As we explained before, we may and we will assume that the Voronoi cells form a partition of $V(P)$.
	Therefore, for each $a\in A$, the sets $X(a,s)$, where $s\in S$, form a partition of $V(P)$.
	Since $S$ separates $A$ from $V(P)\setminus S$, it follows from the definition
	of additively-weighted Voronoi diagram that
	\begin{equation}
		\forall (a,s)\in \Omega,~ \forall v\in X(a,s): ~~~ d_G(a,v)=d_G(a,s)+d_P(s,v).
	\label{eq:2}
	\end{equation}
	Note that in the second part we are using distances in $P$, which is consistent with the distance used
	for defining the Voronoi diagram in $P$.
	The values we want to compute can be rewritten as
	\begin{align}
		\IGL\bigl(a,V(P)\bigr) 
		~&=~ \sum_{v\in V(P)} \, \frac{1}{d_G(a,v)} 
		~=~
		\sum_{s\in S}\left( \sum_{v\in X(a,s)} \,\frac{1}{d_G(a,v)}\right)
		~=~
		\sum_{s\in S}\IGL\bigl(a,X(a,s)\bigr).
	\label{eq:3}
	\end{align}
	We will explain how to compute the values $\IGL\bigl(a,X(a,s)\bigr)$
	for all $(a,s)\in \Omega$ together in $\tilde O(bn+b^2r^2)$ time. From this, the result follows,
	as the desired values can be computed in $O(|\Omega|)=O(bn)$ additional time.

	For each two distinct boundary vertices $s,s'$ of $S$, let $\tilde\Gamma(s,s')$ 
	be the family of bisectors between $s$ and $s'$.
	Note that $\tilde\Gamma(s,s')\subset \Xi(s)$ because they are bisectors.
	Define $\Gamma(s)=\bigcup_{s'\in S, s'\neq s} \tilde\Gamma(s,s')$ for each $s\in S$,
	that is, all the bisectors in $P$ between $s$ and each possible other boundary vertex.
	Obviously, $\Gamma(s)\subset \Xi(s)$.
	Each cycle in each $H(a,s)$ is described by patches from $\Gamma(s)$.
	For each $(a,s)\in \Omega$, let $k(a,s)$ be the number of patches used to describe
	all the cycles of $H(a,s)$. Since the combinatorial complexity of each Voronoi
	diagram is linear in $|S|$, we have $\sum_{s\in S} k(a,s)=O(|S|)= O(b)$ for each $a\in A$.
	
	A simple counting shows that the family $\Gamma(s)$ has in total $O(br^2)$ arcs, counted with multiplicity.
	Indeed, each family $\tilde\Gamma(s,s')$ consists of $O(r)$ cycles in the dual graph $P^*$,
	and each such cycle has $O(r)$ arcs. (Since the cycles of $\tilde\Gamma(s,s')$ are nested and any 
	two consecutive ones in the nested order differ by at least one vertex, the bound follows.)
	Thus, $\tilde\Gamma(s,s')$ has a total of $O(r^2)$ arcs, counted with multiplicity,
	which implies that $\Gamma(s)$ has in total $O(br^2)$ arcs because $|S|=b$.

	We move now to the computational part. We will consider the cycles in $H(a,s)$
	for all pairs $(a,s)\in \Omega$, first from the perspective of fibers with constant $a$,
	and then from the perspective of fibers with constant $s$. 
	
	First we compute the relevant distances.
	From each $s$ in $S$ we compute two shortest path trees, one in $G$ and one in $P$. 
	This takes $\tilde O(bn)$ time because $|S|=b$ and $G$ has $O(n)$ edges.
	Note that now, we have $d_G(s,u)$ for each $(s,u)\in S\times V(G)$ and 
	$d_P(s,v)$ for each $(s,v)\in S\times V(P)$. 
	We also compute the cycles in $\tilde\Gamma(s,s')$ for all $s,s'$. This can be done 
	in $\tilde O(r^2)$ for each $s,s'\in S$ easily; 
	see Cabello~\cite{Cabello2019} for an explicit computation
	or Gawrychowski et al.~\cite{GawrychowskiKMS21} for a faster, implicit representation.
	The idea is that a vertex $v\in V(P)$ changes sides of the Voronoi cells when
	the additive weights of the sites satisfy $\delta(s)-\delta(s')=d_P(s',v)-d_P(s,v)$.
	We easily obtain also $\Gamma(s)$ for all $s\in S$. We have spent $\tilde O(bn+b^2 r^2)$ time.

	We use Theorem~\ref{thm:VDplanar} to preprocess $P$ in $\tilde O(rb^2)$ time.
	We can do this because the vertices of $S$ are incident to the holes of $P$,
	and thus $S$ is covered by $O(1)$ faces of $P$, and those faces can
	be computed from the embedding of $G$ in $O(n)$ time.
	After this preprocessing, we can handle the following queries: 
	for any given weights $\delta(s)$ assigned to each $s\in S$ at query time,
	the additively-weighted Voronoi diagram of $S$ in $P$ can be computed
	in $\tilde O(b)$ time. Note that this Voronoi diagram uses distances in $P$
	
	We use the data structure to compute $H(a,s)$ for all $(a,s)\in \Omega$.
	More precisely, for each $a\in A$,  
	we set the weights $\delta(s)=d_G(s,a)$ for all $s\in S$, and query
	the data structure once to obtain $H(a,s)$ for all $s\in S$.
	Since we are querying the data structure $|A|<n$ times,
	we obtain $H(a,s)$ for all $(a,s)\in \Omega$ in $\tilde O(bn)$ time.
	
	Now we switch to the fibers with constant $s$.
	Consider one fixed $s\in S$.
	We assign to each vertex $v$ of $P$ weight $\omega(v)=d_P(s,v)$.
	Consider the family of cycles $\Gamma(s)$, which has $\ell=O(br)$ 
	cycles of $\Xi(b)$ with a total of $O(br^2)$ arcs, counted with multiplicity.
	Using Theorem~\ref{thm:datastructure2} for all the pairs
	\[
		(\pi,\delta) \in \bigcup_{a\in A} \Bigl( H(a,s)\times \{ d_G(s,a) \} \Bigr),
	\]
	we obtain, for each cycle $\gamma\in H(a,s)$, the value
	\begin{align*}
		\ISW_{P,s}(\gamma,d_G(a,s))~&=~ \sum_{v \in \exterior_P(\gamma,s)} \frac{1}{d_G(a,s)+\omega(v)}\\
		&=~ \sum_{v \in \exterior_P(\gamma,s)} \frac{1}{d_G(a,s)+d_P(s,v)}
	\end{align*}
	in time
	\[
		\tilde O\left( r^2 + \sum_{a\in A} k(a,s) + (br)r\right)= \tilde O\left( \sum_{a\in A} k(a,s)+br^2 \right).
	\]
	Still for the same fixed $s$ and the same weights $\omega(\cdot)$, 
	we use Lemma~\ref{le:wholepiece} with the values 
	$\{ \delta_1,\dots, \delta_{|A|}\}=\{ d_G(a,s)\mid a\in A \}$ to compute for each $a\in S$
	the value
	\[
		\sum_{v\in V(P)} \,\frac{1}{d_G(a,s) + \omega(v)} ~=~
		\sum_{v\in V(P)} \,\frac{1}{d_G(a,s) + d_P(s,v)}.
	\]
	Note that $d_G(a,s) + d_P(s,v)$ is not $d_G(a,v)$, in general.
	This takes $\tilde O(n+r)=\tilde O(n)$ time, for a fixed $s\in S$.
	
	We repeat the procedure for each $s\in S$. In total, we spend
	\[
		\sum_{s\in S} \tilde O\left( \sum_{a\in A} k(a,s)+br^2 \right) ~=~
		\tilde O\left( |S| br^2 + \sum_{a\in A} \sum_{s\in S} k(a,s) \right) ~=~
		\tilde O\left( b^2r^2 + \sum_{a\in A} b \right) ~=~
		\tilde O\left( b^2r^2 + bn \right)
	\]
	time and we obtain the values
	\begin{align*}
		\sigma(\gamma,a,s) ~&:=~
		\sum_{v \in \exterior_P(\gamma,s)} \frac{1}{d_G(a,s)+d_P(s,v)}\\
		\tau(a,s) ~&:=~
		\sum_{v \in V(P)} \frac{1}{d_G(a,s)+d_P(s,v)}
	\end{align*}
	for all $\gamma\in H(a,s)$ and all $(a,s)\in \Omega$.
	
	Combining properties~\eqref{eq:1} and~\eqref{eq:2}, we have for each $(a,s)\in \Omega$
	\begin{align*}
		\IGL\bigl(a,X(a,s)\bigr) 
		~&=~
		\sum_{v\in X(a,s)} \, \frac{1}{d_G(a,v)}\\
		&=~ \sum_{v\in X(a,s)} \, \frac{1}{d_G(a,s)+d_P(s,v)}\\
		&=~ \sum_{v\in V(P)} \, \frac{1}{d_G(a,s)+d_P(s,v)} -
			\sum_{\gamma\in H(a,s)} \left( \sum_{v\in \exterior_P(\gamma,s)} \, \frac{1}{d_G(a,s)+d_P(s,v)}  \right)\\
		&=~ \tau(a,s) - \sum_{\gamma\in H(a,s)} \sigma(\gamma,a,s).
	\end{align*}
	Thus, we can obtain each single $\IGL\bigl(a,X(a,s) \bigr)$ in 
	$O(1+|H(a,s)|)$ time, using the data already computed. 
	Since $\sum_{s\in S} |H(a,s)|\le \sum_{s\in S} k(a,s)= O(b)$, this computation 
	for all $(a,s)\in \Omega$ together takes $O(bn)$ time.
	With this we have computed $\IGL\bigl(a,X(a,s)\bigr)$
	for all $(a,s)\in \Omega$, and we have finished the proof because of equation~\eqref{eq:3}.
\end{proof}


\subsection{Final algorithm}
Using Lemma~\ref{le:perpiece} for each piece in a $r$-division with a few holes,
we obtain the final result.

\begin{theorem}
	Let $G$ be a planar graph with $n$ vertices and positive, abstract edge lengths.
	In $\tilde O(n^{9/5})$ time we can compute 
	\[
		\IGL(a,V(G)\setminus\{ a \}) ~=~ \sum_{v\in V(G)\setminus\{ a\}} \frac{1}{d_G(a,v)}
		~~~~~~\text{for all $a\in V(G)$}.
	\]
\end{theorem}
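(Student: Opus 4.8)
The plan is to reduce the computation to Lemma~\ref{le:perpiece} via an $r$-division, and then to balance the parameter $r$. We may assume $G$ is connected (otherwise work component by component, since $1/d_G(a,v)=0$ across components) and, after adding dummy edges of sufficiently large length, triangulated; this changes no distance. First I would compute an $r$-division with few holes $\PP=\{P_1,\dots,P_k\}$ of $G$ in $O(n)$ time using Klein, Mozes and Sommer~\cite{KleinMS13}, so that $k=O(n/r)$, every piece has at most $r$ vertices, $O(\sqrt r)$ boundary vertices and $O(1)$ holes; the value $r=n^{2/5}$ will be fixed at the end. Recall that the interiors $V(P_1)\setminus\partial P_1,\dots,V(P_k)\setminus\partial P_k$ together with $\partial:=\bigcup_i\partial P_i$ partition $V(G)$, and that $|\partial|=O(n/\sqrt r)$. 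For a fixed $a$ I would then split $\IGL(a,V(G)\setminus\{a\})$ into three groups of terms: (i)~$\IGL(a,\partial\setminus\{a\})$; (ii)~$\sum_{i:\,a\notin V(P_i)}\IGL(a,V(P_i)\setminus\partial P_i)$; and (iii)~$\sum_{i:\,a\in V(P_i)}\IGL\bigl(a,(V(P_i)\setminus\partial P_i)\setminus\{a\}\bigr)$. As the interiors are pairwise disjoint and disjoint from $\partial$, every pair $\{a,v\}$ with $v\neq a$ is counted exactly once, and the set-minus-$\{a\}$'s avoid the undefined $d_G(a,a)$ term.

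For group~(i) I would compute a shortest-path tree in $G$ rooted at each $s\in\partial$, in $\tilde O(n\,|\partial|)=\tilde O(n^2/\sqrt r)$ time (largely subsumed by the shortest-path trees Lemma~\ref{le:perpiece} already builds from each $\partial P_i$); this makes $d_G(s,v)$ available for every $s\in\partial$, $v\in V(G)$, so $\IGL(a,\partial\setminus\{a\})$ for all $a$ follows in $O(n\,|\partial|)$ further time. For group~(ii) I would apply Lemma~\ref{le:perpiece} to each $P_i$ (with $b=O(\sqrt r)$), getting $\IGL(a,V(P_i))$ for every $a\notin V(P_i)$ in $\tilde O(\sqrt r\,n+r^3)$ time per piece, and then subtract $\IGL(a,\partial P_i)=\sum_{s\in\partial P_i}1/d_G(a,s)$ (known from group~(i)) to obtain $\IGL(a,V(P_i)\setminus\partial P_i)$. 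Over the $O(n/r)$ pieces this costs $\tilde O(n^2/\sqrt r+nr^2)$ time, the subtractions adding only $O(\sqrt r)$ per pair $(a,P_i)$ with $a\notin V(P_i)$, i.e. $O(n^2/\sqrt r)$ overall.

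For group~(iii), the within-piece interactions, the key observation is that distances in $G$ between two vertices of a piece $P_i$ can be recovered inside a graph on $V(P_i)$ alone: let $G_i$ consist of $P_i$ together with an edge of length $d_G(s,s')$ for each pair $s,s'\in\partial P_i$ (available from group~(i)). Since $\partial P_i$ separates the interior of $P_i$ from the rest of $G$, one checks that $d_{G_i}(u,v)=d_G(u,v)$ for all $u,v\in V(P_i)$. Running Dijkstra in $G_i$ from each of the $\le r$ vertices of $P_i$---that is, $r$ runs on a graph with $O(r)$ edges---then yields all the distances $d_G(a,v)$ with $a,v\in V(P_i)$, hence $\IGL\bigl(a,(V(P_i)\setminus\partial P_i)\setminus\{a\}\bigr)$. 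This is $\tilde O(r^2)$ per piece, so $\tilde O(nr)$ in total.

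Summing up, the running time is $\tilde O(n+n^2/\sqrt r+nr^2+nr)=\tilde O(n^2/\sqrt r+nr^2)$, which is minimized at $r=n^{2/5}$, giving $\tilde O(n^{9/5})$. Since Lemma~\ref{le:perpiece} already encapsulates the hard part---Voronoi diagrams in planar pieces plus the inverse-shifted-query machinery---the remaining obstacle is organizational: choosing a partition of the vertex pairs that is at once complete and non-overlapping, and recognizing that the within-piece pairs are best handled through a small auxiliary graph rather than by shortest-path computations in $G$. The exponent is $9/5$ rather than the $5/3$ of the Wiener index or diameter because of the $b^2r^2=O(r^3)$ term in Lemma~\ref{le:perpiece}---coming from the $O(r^2)$-time computation of the sets $U(\dart ab)$ and from the bisector families $\Gamma(s)$ having $O(br^2)$ arcs---which forces $r$ to be only $n^{2/5}$.
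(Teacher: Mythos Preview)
Your proof is correct and follows essentially the same approach as the paper: compute an $r$-division with few holes, handle the boundary-to-everything distances with shortest-path trees from $\partial$, invoke Lemma~\ref{le:perpiece} on each piece for the cross-piece contributions, handle within-piece pairs via the auxiliary graph $G_i$ obtained by adding a clique on $\partial P_i$ with edge lengths $d_G(s,s')$, and balance at $r=n^{2/5}$. The only cosmetic difference is that the paper treats vertices $a\in\partial$ as a separate case while you fold them into the same three-term decomposition; both are fine since the interiors partition $V(G)\setminus\partial$.
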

\begin{proof}
	First, we compute an $r$-division of $G$ with few holes.
	Such division has $k=O(n/r)$ pieces $\PP=\{ P_1,\dots,P_k\}$, 
	where each piece $P_i\in \PP$ has $O(r)$ vertices,
	$O(\sqrt{r})$ boundary vertices, and $O(1)$ holes. 
	Moreover, different pieces only intersect at boundary vertices.
	Such an $r$-division can be computed in $\tilde O(n)$ time~\cite{FakcharoenpholR06,KleinMS13,KleinS98}. 
	
	Let $B$ be the set of vertices that are boundary vertices in some piece.
	Note that $B$ contains $O(n/r)\cdot O(\sqrt{r})= O(n/\sqrt{r})$ vertices.
	The vertex set $V(G)$ is the disjoint union of $B, V(P_1)\setminus \partial P_1, 
	\dots, V(P_k)\setminus \partial P_k$.
	
	We compute a shortest path tree from each vertex of $B$. Each shortest
	path tree takes linear time~\cite{hkrs-fspap-97}, 
	and thus we spend $O(n^2/\sqrt{r})$ time in this computation.
	After this, for each vertex $u$ of $G$ and each vertex $b$ of $B$, we
	can retrieve $d_G(u,b)$ in constant time.
	In particular, any sum of inverse of distances between pair of vertices, 
	where are at least one vertex is in $B$, can be computed in time proportional to the number
	of terms in the sum.
	It follows that the desired values $\IGL(a,V(G)\setminus\{ a \})$ can be
	computed now for all $a\in B$ with additional time $O(|B|\cdot |V(G)|)=O(n^2/\sqrt{r})$.
	
	Consider one fixed piece $P_i\in \PP$.
	We use Lemma~\ref{le:perpiece}
	to compute in $\tilde O(r^{1/2} n+ r^3)$ time the values
	$\IGL\bigl(a,V(P_i) \bigr)$ for all $a\in V(G)\setminus V(P_i)$.
	We also compute $\IGL(a, \partial P_i)$ by adding the 
	distances from $\partial P_i\subset B$; this takes $O(r^{1/2} n)$ time.
	With this we obtain the values
	\[
		\IGL(a,V(P_i)\setminus\partial P_i)~=~ \IGL(a,V(P_i)) - \IGL(a,\partial P_i)
		~~~~~~\text{for all $a\in V(G)\setminus V(P_i)$}.
	\]
	Finally, we compute the distance (in $G$) between all pairs of vertices in $P_i$
	in $\tilde O(r^2)$ using usual machinery. For example, we may add edges
	between the $O((\sqrt{r})^2)=O(r)$ pairs of boundary vertices $\partial P_i$
	and compute distances in the resulting graph, which has $O(r)$ vertices and edges.
	Repeating this for the $O(n/r)$ pieces of $\PP$ together we have spent $\tilde O(n^2/\sqrt{r} + nr^2)$ time.

	Recall that $V(G)$ is the disjoint union of $B, V(P_1)\setminus \partial P_1, 
	\dots, V(P_k)\setminus \partial P_k$. Thus, for 
	each $a\in V(G)\setminus B$, there is a unique index $j(a)$ such that $a\in P_{j(a)}\in \PP$,
	and therefore 
	\[
		\IGL(a,V(G)\setminus\{ a\}) ~=~ 
			\sum_{i\in[k]\setminus\{ j(a) \}} \IGL(a,V(P_i)\setminus\partial P_i)
			~+~ \sum_{b\in B} \frac{1}{d_G(a,b)}
			~+~ \sum_{v\in V(P_{j(a)})\setminus \partial P_{j(a)}} \frac{1}{d_G(a,v)} \, .
	\]
	The $k-1=O(n/r)$ terms in the first sum have been computed, 
	the $|B|=O(n/\sqrt{r})$ terms in the second sum have also been computed,
	and the $|V(P_{j(a)})\setminus \partial P_{j(a)}|=O(r)$ terms in the
	last sum have also been computed.
	We conclude that, for a single $a\in V(G)\setminus B$, we can recover
	$\IGL(a,V(G)\setminus\{ a\})$ in $O((n/\sqrt{r}) + r)$ time.
	Doing this for each vertex $a\in V(G)\setminus B$ takes $O(n^2/\sqrt{r} + nr)$.
	
	The whole algorithm, as explained, takes $\tilde O(n^2/\sqrt{r} + nr^2)$.
	Setting $r=n^{2/5}$ we obtain a running time of $\tilde O(n^{9/5})$.
\end{proof}

\begin{corollary}
	Let $G$ be a planar graph with $n$ vertices and positive, abstract edge lengths.
	In $\tilde O(n^{9/5})$ time we can compute $\IGL(G)$.
\end{corollary}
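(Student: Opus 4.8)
The plan is to reduce directly to the preceding theorem, which already does all the work. First I would run that algorithm to obtain, in $\tilde O(n^{9/5})$ time, the value
$\IGL(a,V(G)\setminus\{a\}) = \sum_{v\in V(G)\setminus\{a\}} 1/d_G(a,v)$ for every vertex $a\in V(G)$.

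Next I would observe that summing these $n$ quantities counts each unordered pair $uv\in\binom{V}{2}$ exactly twice: once through the term $1/d_G(u,v)$ occurring in $\IGL(u,V(G)\setminus\{u\})$ and once through the equal term $1/d_G(v,u)$ occurring in $\IGL(v,V(G)\setminus\{v\})$. The convention $1/d_G(u,v)=0$ for pairs in different connected components is respected on both sides, so no pair is mishandled. Hence
\[
	\sum_{a\in V(G)} \IGL\bigl(a,V(G)\setminus\{a\}\bigr) ~=~ 2 \sum_{uv\in\binom{V}{2}} \frac{1}{d_G(u,v)} ~=~ 2\,\IGL(G),
\]
and the algorithm outputs $\IGL(G) = \tfrac12 \sum_{a\in V(G)} \IGL(a,V(G)\setminus\{a\})$. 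This final summation and the division by $2$ cost $O(n)$ additional time, so the overall running time is dominated by the $\tilde O(n^{9/5})$ spent invoking the theorem.

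There is essentially no obstacle here; the only point needing (minor) care is the factor-of-two bookkeeping between the ordered per-vertex sums produced by the theorem and the unordered sum in the definition of $\IGL(G)$. I would also remark that the same output immediately yields the Harary index (namely $2\,\IGL(G)$) and, after dividing by $\binom n2$, the average inverse geodesic length (global efficiency), at no extra asymptotic cost.
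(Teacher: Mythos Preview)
Your proof is correct and follows exactly the same approach as the paper: invoke the preceding theorem to obtain $\IGL(a,V(G)\setminus\{a\})$ for every $a$, then use $\IGL(G)=\tfrac12\sum_a \IGL(a,V(G)\setminus\{a\})$. The additional remarks you make about disconnected pairs and the Harary index/global efficiency are fine but not needed.
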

\begin{proof}
	We use the Theorem to compute $\IGL(a,V(G)\setminus \{a \})$ for
	all $a\in V(G)$. Then we note that
	\[
		\IGL(G) ~=~ \frac 12 \sum_{a\in V(G)} \IGL(a,V(G)\setminus \{a \}).
	\]
\end{proof}

\section{Conclusions}

Let us remark a few extensions of the results we presented. The extensions
are applicable for graphs of bounded treewidth and for planar graphs without
affecting the asymptotic running times of each case.

Firstly, the algorithm can be adapted to work for directed planar graphs.
In such case it is meaningful to compute
\[
	\sum_{uv\in \binom{V(G)}{2}} \left( \frac{1}{d_G(u,v)}+\frac{1}{d_G(v,u)}\right).
\]
We provide two clarifying remarks on how to achieve this. 
The first observation is that distances \emph{to} a separator vertex can be 
computed by reversing the arcs and computing the distances \emph{from} the separator vertex.
The second observation is that Voronoi diagrams can also be considered for such "distances"
defined in directed graphs; this has been done also in previous works~\cite{Cabello2019,GawrychowskiKMS21}.

As a second extension, we note that we can also compute the sum of the distances between some
\emph{marked} vertices. This means that, for any given $U\subset V(G)$, 
we can compute in the same asymptotic running time the sum
\[
	\sum_{uv\in \binom{U}{2}} \frac{1}{d_G(u,v)}
	~~~~~~\text{ or }~~~~~~
	\sum_{u\in U}\sum_{v\in V(G)\setminus\{u\}} \frac{1}{d_G(u,v)}.
\]
For the case of planar graphs, where this may be less obvious, 
see the remark after Theorem~\ref{thm:datastructure2}.

As a final extension, note that we can handle any rational function of constant degree
that depends on the distances.
For example, the same approach can be used for computing
\[
	\sum_{uv\in \binom{V(G)}{2}} \frac{d_G(u,v)}{(1+ d_G(u,v))^2}
	~~~~~~\text{ or }~~~~~~
	\sum_{uv\in \binom{V(G)}{2}} (d_G(u,v))^2.  
\]
Indeed, this only affects the rational (or polynomial) functions that we have to consider 
and evaluate (at the \emph{shifts}),
but Lemma~\ref{lem:multipoint_evaluation} keeps being applicable.
On the other hand, current methods developed here or in previous works
do not seem applicable to compute in subquadratic time values like
\[
	\sum_{uv\in \binom{V(G)}{2}} \sqrt{d_G(u,v)}  
	~~~~~~\text{ or }~~~~~~
	\sum_{uv\in \binom{V(G)}{2}} \frac{1}{\sqrt{d_G(u,v)}} , 
\]
even assuming a strong model of computation, like Real RAM, where sums
of square roots can be manipulated in constant time.

\section*{Acknowledgments}
I am grateful to the reviewers for pointing out that the model of computation should be clarified
and several additional corrections.

\bibliographystyle{abuser}
\bibliography{bibliography-IGL}

\end{document}